\theoremstyle{plain}
\newtheorem{theorem}{Theorem}
\newtheorem{lemma}{Lemma}
\newtheorem{corollary}{Corollary}
\theoremstyle{definition}
\newtheorem{definition}{Definition}
\newtheorem{remark}{Remark}
\newtheorem{example}{Example}
\newtheorem{constraint}{Constraint}
\newcommand\restr[2]{{
  \left.\kern-\nulldelimiterspace 
  #1 
  \vphantom{\big|} 
  \right|_{#2} 
  }}
\begin{document}

\title{The complexity of multilayer $d$-dimensional circuits}

\author{
    \\ Sitdikov Timur Rashidovich\footnote{Google LLC, London, UK}
	\\ Kalachev Gleb Vyacheslavovich\footnote{Lomonosov Moscow State University, Moscow, Russia}
}
\date{}

\maketitle

\begin{abstract}
In this paper we research a model of multilayer circuits with a single logical layer. We consider $\lambda$-separable graphs as a support for circuits. We establish the Shannon function lower bound $\max \left(\frac{2^n}{n}, \frac{2^n (1 - \lambda)}{\log k} \right)$ for this type of circuits where $k$ is the number of layers. For $d$-dimensional graphs, which are $\lambda$-separable for $\lambda = \frac{d - 1}{d}$, this gives the Shannon function lower bound $\frac{2^n}{\min(n, d \log k)}$. For multidimensional rectangular circuits the proved lower bound asymptotically matches to the upper bound.
\end{abstract}

\newpage

\begin{center}
\tableofcontents
\vspace{3cm}
\end{center}

\clearpage

\newpage

\section{Introduction}

The problem of designing circuits which compute Boolean functions and are optimal or suboptimal in some sense appeared in the middle of the 20th century due to the rapid development of computer technology. One of the most intensive studied circuit models since the 1950s is Boolean circuits. The number of gates (also \emph{size} or \emph{complexity}) is a natural complexity measure for Boolean circuits. One may define the complexity of a Boolean function as the minimal size of a Boolean circuit computing the function. Muller \cite{Muller1} showed that the size of every Boolean function of $n$ variables does not exceed $O \left(\frac{2^n}{n}\right)$. Lupanov \cite{Lupanov1} proved that the complexity of almost all Boolean functions over the standard basis $\{\vee,\&,\neg\}$ is asymptotically equal to $\frac{2^n}{n}$. Also Lupanov obtained asymptotic bound for the complexity of Boolean functions with respect to any finite basis.

In practice when designing Boolean circuits one must take into account several factors like placement of gates, wiring, etc. Models of Boolean circuits considering these factors to some extent were studied in some papers appeared since the 1960s. Korshunov \cite{Korshunov1} obtained size bounds for Boolean circuits embedded in a 3-dimensional space with lower-bounded distances between gates, lower-bounded distances between wires and upper-bounded lengths of wires. Kravtsov \cite{Kravtsov1} considered Boolean circuits with gates placed in cells of a rectangular grid and proved the order of $2^n$ for the Shannon function. McColl \cite{McColl1} obtained Shannon function lower bound $\Omega(2^n)$ for planar circuits.

Models of cellular circuits similar to Kravtsov's model were considered in several more recent papers. Albrecht \cite{Albrecht1} showed that Shannon function asymptotics for cellular circuits has a form $c \cdot 2^n$, where $c$ is a constant dependent on a basis. Gribok \cite{Gribok1} obtained Shannon function asymptotics to $2^n$ for a special basis of cellular elements. The connection between size and other complexity measures for cellular circuits also has been examined. Cheremisin \cite{Cheremisin1} showed that it's impossible to design a cellular circuit of optimal size and activity simultaneously for a binary decoder. Kalachev \cite{Kalachyov0, Kalachyov1, Kalachyov2, Kalachyov3, Kalachyov4} researched simultaneous minimization of a size, depth and activity for cellular circuits. Efimov \cite{Efimov1, Efimov2} examined potential of three-dimensional cellular circuits.

VLSI circuits are one of the closest to practice circuit models. In VLSI circuits length of wires define the signal propagation time between gates. VLSI circuits have been considered in a number of papers and books (Thompson \cite{Thompson1}, Ullman \cite{Ullman1}). Kramer and van Leeuwen \cite{Kramer1} researched simultaneous minimization of size (area) and period.

Another direction of research is a connection between complexity measures for different circuit models. Savage \cite{Savage1, Savage2} examined the connection between VLSI circuits and planar circuits. Shkalikova \cite{Shkalikova1} showed a relation between the area of flat circuits and the volume of three-dimensional circuits.

The bounds of size proved for the mentioned above circuit models are above Lupanov's bound $\frac{2^n}{n}$ for Boolean circuits size. One of the reasons for this difference is that it's impossible to conduct arbitrary number of wires between gates under spatial constraints. If Boolean circuits are embedded into a graph (e.g. rectangular grid), the number of wires that can be conducted between fragments of the graph is naturally bounded by the size of edge separator in the graph.

In this paper we examine the relation between Shannon function and separability properties of the graph where Boolean circuits are embedded. We consider embeddings with constraints from \cite{KModel1}:
\begin{itemize}
	\item{No more than one non-trivial gate of a Boolean circuit can be embedded into any vertex of a graph.}
	\item{No more than $k$ wires of a Boolean circuit can be embedded into any edge of a graph.}
\end{itemize}

The main result of this paper is the lower bound of Shannon function for graphs with separator of size $O(p^{\lambda})$, where $p$ is the order of a graph and $0 < \lambda < 1$. We call such graphs as $\lambda$-separable. We also show that the proved lower bound is applicable to circuits embedded into a space with $2$ or more dimensions. Given the Shannon function upper bound for multidimensional rectangular circuits \cite{KModel1}, we obtain the Shannon function asymptotics for this model of circuits.

\section{Key definitions and results}

\subsection{Multilayer circuits}

The model if multilayer circuits with a single logical layer was considered in \cite{KModel1}. Let's briefly summarize key definitions. 

According to \cite[p.~148]{Chashkin1}, a Boolean circuit in a basis $B$ is a labeled directed acyclic graph. The labeling of vertices defines which vertices are inputs or outputs. It also maps all non-input vertices to Boolean functions from the basis $B$. Edges (wires) of a Boolean circuit are labeled by integers, and for each vertex the labeling of its input edges defines the order of arguments for the Boolean function mapped to the vertex.

A \emph{support} is a nonempty graph with a finite or countable number of vertices. In general a support may contain multiple edges or self-loops.

An \emph{embedding} of a Boolean circuit $S$ into a support $T$ is a homomorphism $h \colon S \to T$.

A \emph{circuit with a support} $T$ is a pair $(S,h)$ where $S$ is a Boolean circuit and $h$ is an embedding of $S$ into $T$. We use the term "circuit" instead of "circuit with a support" for brevity where it would not cause a misreading. A circuit $(S, h)$ computes a Boolean function $f$ if a Boolean circuit $S$ computes $f$.

In practical terms these definitions may be interpreted as follows. One of the problems in VLSI design is an embedding of gates and wires into a plate. The plate may be considered as a graph, i.e. as a support for Boolean circuits.

Usually there are some constraints on embeddings in circuit design problems. In this paper we consider the following constraints.
\begin{constraint}\label{CondVertex}
	Any vertex of a support may contain no more than one gate computing non-constant Boolean function.
\end{constraint}
\begin{constraint}\label{CondEdge}
    Any edge of a support may contain no more than $k$ wires of a Boolean circuit.
\end{constraint}

These constraints may be interpreted as follows. A circuit consists of $k$ "layers" where only one layer is "logical" (i.e. may contain gates computing non-constant Boolean functions). The remaining layers are used only for wiring. Therefore we call the circuits under constraints \ref{CondVertex}--\ref{CondEdge} as \emph{multilayer circuits}.

Let's denote by $M_k^T$ the set of all $k$-layer circuits with a support $T$.

The \emph{complexity} of a multilayer circuit if the number of support vertices used in the corresponding embedding. If $M$ is a set of multilayer circuits and $f$ is a Boolean function, one may naturally define the complexity of the function $f$ in the set $M$ as the minimal complexity of a circuit from $M$ computing $f$. If no such a circuit exists in $M$, we may formally consider infinity as the complexity of $f$. Let's denote the complexity of the function $f$ in the set $M$ as $L(M, f)$.

One may naturally define the Shannon function of the complexity of $k$-layer circuits with the support $T$:
\[
	L(M_k^T, n) = \max_{f \in B_n} L(M_k^T,f).
\]

\subsection{Supports}
The properties of a support are crucial for embeddings, as under the same constraints different supports in general admit completely different sets of embeddings. In this paper we consider $\lambda$-separable graphs as supports. We also consider $d$-dimensional graphs as an important special case of $\lambda$-separable graphs.

A class $\mathcal{G}$ of graphs is \emph{monotone} if every subgraph of a graph in $\mathcal{G}$ is also in $\mathcal{G}$.

\subsubsection{Classes of graphs $\mathcal{G}(q, \theta)$ and $\mathcal{G}(\lambda, q, \theta)$}
Let $q \in \mathbb{N}$ and $\theta > 1$ be some constants. Let's define class of graphs $\mathcal{G}(q, \theta)$ as the set of all supports with the following properties:
\begin{itemize}
	\item{Degree of each vertex in $T$ is bounded by $q$.}
	\item{For any integer $p$ the number of different non-isomorphic subgraphs of $T$ with $p$ vertices does not exceed $\theta^p$.}
\end{itemize}

The first property (bounding for vertex degree) is a natural limitation for circuit design problems. The second property is met for several important categories of graphs, including planar graphs \cite{PlanarLowerBound1} and $d$-dimensional graphs defined below.

The formal definition of $\lambda$-separability is considered in the section \ref{SecSep}. Substantively each subgraph of a $\lambda$-separable support can be split into smaller fragments by removing $O(p^{\lambda})$ vertices (edges), where $p$ is the number of vertices in the subgraph and $0 < \lambda < 1$.

Let's denote the subclass of $\lambda$-separable supports from $\mathcal{G}(q, \theta)$ as $\mathcal{G}(\lambda, q, \theta)$.

\subsubsection{$d$-dimensional graphs}
Let $d \ge 2$ be an integer. A support $T$ is a \emph{$d$-dimensional graph}, if there are constants $c_v > 0, c_e > 0$ such that $T$ can be embedded into $d$-dimensional Euclidean space with pairwise distances between vertices no less than $c_v$ and edge lengths no greater than $c_e$.

\begin{remark}
The constraints in the definition above are similar to the constraints in the definition of circuits with volumetric gates from Korshunov's paper \cite{Korshunov1}.
\end{remark}

\begin{remark}
We can always assume that one of the constants $c_v$ and $c_e$ is equal to $1$. Below we assume that $c_v = 1$.
\end{remark}

\begin{remark}
It's clear that every finite support is a $d$-dimensional graph with a great enough value of $c_e$. Therefore the definition of $d$-dimensional graphs is senseless for finite supports. However one may define a \emph{monotone class of $d$-dimensional graphs with a parameter $c_e$}, where the constant $c_e$ is common for all graphs in the class. It's obvious that such a class is a monotone class of graphs. Below we omit the constant $c_e$ and speak about a \emph{$d$-dimensional class of graphs} in cases when the value of $c_e$ is not important.
\end{remark}

\begin{example}
The graph of a $d$-dimensional grid is a $d$-dimensional graph. It's sufficient to consider $c_e = 1$.
\end{example}

\begin{example}
One can prove that the graph of an infinite binary tree is not $d$-dimensional for any $d$. Indeed, the number of vertices at distance $p$ from the root depends on $p$ exponentially, though the number of $d$-dimensional balls with radii $1$ that can be placed into a ball with radii $c_e \cdot d$ is $O(p^d)$.
\end{example}

Embedding of Boolean circuits into $d$-dimensional grid was considered in \cite{KModel1}. As in that paper, we use the term \emph{multidimensional rectangular circuits} for such circuits and use the notation $M_k^d$ instead of $M_k^{\mathbb{Z}^d}$.

In section \ref{SecDDimensional} we prove that all $d$-dimensional supports belong to classes $\mathcal{G}(\lambda, q, \theta)$ for $\lambda = \frac{d - 1}{d}$ and some values of $q$ and $\theta$.

\subsection{Other designations and agreements}

The expression $\log a$ always denotes a base two logarithm of $a$. We formally assume that $x \log x = 0$ for $x = 0$.

We denote by $B_{n,m}$ the set of Boolean functions with $n$ inputs and $m$ outputs ($n \ge 0$, $m \ge 1$).

The expression $f(x) \lesssim g(x)$ corresponds to the inequality $\varlimsup_{x \to \infty} \frac{f(x)}{g(x)} \le 1$. Similarly we use the expression $f(x) \gtrsim g(x)$. We may use a complex condition when passing to a limit, e.\,g. $f(n, k) \lesssim g(n, k)$ as $k \to \infty$, $\log k \le n$.

\subsection{Results}

In this paper for every support $T \in \mathcal{G}(\lambda, q, \theta)$ we prove that
\[
	L(M_k^T, n) \gtrsim \max \left(\frac{2^n}{n}, \frac{2^n (1 - \lambda)}{\log k} \right)\quad\mbox{as}\quad k\to\infty,\ n\to\infty.
\]

It is also proved that every $d$-dimensional support belongs to the class $\mathcal{G}(\lambda, q, \theta)$ for $\lambda = \frac{d - 1}{d}$ and some constants $q$ and $\theta$. Therefore the following estimation holds for for $d$-dimensional supports:
\[
	L(M_k^T, n) \gtrsim \frac{2^n}{\min(n,d \log k)}\quad\mbox{as}\quad k\to\infty,\ n\to\infty.
\]

An upper bound of Shannon function for multidimensional rectangular circuits matching the lower bound above was proved in \cite{KModel1}. Thus we have the asymptotics of Shannon function for multidimensional rectangular circuits:
\[
	L(M_k^d, n) \sim \frac{2^n}{\min(n,d \log k)}\quad\mbox{as}\quad k\to\infty,\ n\to\infty.
\]

\subsection{The structure of the paper}

In this paper all the proofs are divided into three sections.

Section \ref{SecSep} contains definitions related to graph separators. The main result of the section is lemma \ref{LemSepDiv}. The point of the lemma is that $\lambda$-separable graphs supporting "good" (in some sense) partitioning into two parts also support "good" partitioning into many parts.

Section \ref{SecPolSepGraph} contains the proof of the lower bound for Shannon function of the complexity for supports from classes $\mathcal{G}(\lambda, q, \theta)$. The key part of this section is lemma \ref{LemLowerBoundCount}.

Section \ref{SecDDimensional} is devoted to the proof of the lower bound for Shannon function for $d$-dimensional supports. The section also contains the asymptotics of Shannon function for multidimensional rectangular circuits as a corollary. Essentially it's proved that every $d$-dimensional support belongs to a class $\mathcal{G}(\lambda, q, \theta)$ for $\lambda = \frac{d - 1}{d}$ and some constants $q$ and $\theta$.

\section{Graph separators and their properties}\label{SecSep}

\subsection{Definitions and the simplest properties of separators}

In this section we provide the definitions of edge and vertex separators in graphs and prove some of the simplest properties of separators.

\paragraph{Edge separators.}
We define edge separators similarly to the definitions of vertex separators from \cite{PlanarSeparatorTheorem}.

\begin{definition}\label{DefEdgeSep}
    Let $f \colon \mathbb{N} \to \mathbb{R}$ be a function. A monotone class of graphs $\mathcal{G}$ is \emph{edge $f(p)$-separable} if there exist constants $\frac{1}{2} \le \alpha < 1$, $\beta \ge 0$, $m \ge 2$ such that any graph $G \in \mathcal{G}$ with $p$ vertices ($p \ge m$) can be split into two subgraphs with no more than $\alpha p$ vertices each and no more than $\beta f(p)$ edges between the subgraphs.
\end{definition}

\begin{remark}
The constant $m$ is technically important, since it allows not to consider some corner cases. For example graph $K_1$ cannot be split into two nonempty subgraphs in principle, thus we may always assume that $m \ge 2$. In general $m$ may be greater than $2$.
\end{remark}

\begin{definition}
    Let $f \colon \mathbb{N} \to \mathbb{R}$ be a function. A support $T$ is \emph{edge $f(p)$-separable} if the monotone class of all finite subgraphs of $T$ is edge $f(p)$-separable.
\end{definition}

The interesting case is when $f(p)$ is a slowly growing function. Essentially this allows to use the divide-and-conquer technique for obtaining effective algorithms and non-trivial lower bounds in proofs. In this paper we consider the function $p^{\lambda}$ with $0 < \lambda < 1$ as $f(p)$. We also call edge $p^{\lambda}$-separable supports and monotone classes of graphs as \emph{edge $\lambda$-separable}.

\paragraph{Vertex separators.}
The following definition of a vertex separator is a modification of definition 2.1 from \cite{MultidimensionalSeparator} applied to monotone classes of graphs.

\begin{definition}\label{DefVertexSep}
    Let $f \colon \mathbb{N} \to \mathbb{R}$ be a function. A monotone class of graphs $\mathcal{G}$ is \emph{vertex $f(p)$-separable} if there exist constants $\frac{1}{2} \le \alpha < 1$, $\beta \ge 0$, $m \ge 2$ such that for any graph $G \in \mathcal{G}$ with $p$ vertices ($p \ge m$) there exists a partition of $V(G)$ into three parts $A$, $B$, $C$ satisfying the following conditions:
\begin{itemize}
	\item{There are no edges from $A$ to $B$.}
	\item{$|A|, |B| \le \alpha p$.}
	\item{$|C| \le \beta f(p)$.}
\end{itemize}
\end{definition}

It's obvious that for any monotone class of graphs edge $f(p)$-separability implies vertex $f(p)$-separability, as one may consider endpoints of an edge separator as a vertex separator. The converse is not always true. For example, the class of stars $K_{1, p}$ and their subgraphs is vertex $1$-separable, but is not edge $1$-separable.

The following simple lemma shows that for monotone classes of graphs with bounded vertex degree vertex $f(p)$-separability implies edge $f(p)$-separability.

\begin{lemma}\label{AssSepVtoE}
    Let $\mathcal{G}$ be a monotone class of vertex $f(p)$-separable graphs with parameters $\alpha$, $\beta$, $m$ where vertex degree of any graph is bounded by $q$. Then $\mathcal{G}$ is edge $f(p)$-separable with parameters $\max \left(\frac{2}{3}, \alpha \right)$, $q \beta$ and $\max(m, 2)$.
\end{lemma}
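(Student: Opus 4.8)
The plan is to take a vertex separator $(A, B, C)$ guaranteed by vertex $f(p)$-separability and convert it into an edge separator by redistributing the vertices of $C$ between the two sides. First I would apply the hypothesis to an arbitrary graph $G \in \mathcal{G}$ with $p \ge \max(m, 2)$ vertices, obtaining a partition $V(G) = A \sqcup B \sqcup C$ with no edges between $A$ and $B$, with $|A|, |B| \le \alpha p$, and with $|C| \le \beta f(p)$. The natural construction is to assign each vertex of $C$ to whichever side is currently smaller (greedily, or by any rule), producing a partition $V(G) = A' \sqcup B'$ with $A \subseteq A'$ and $B \subseteq B'$. Since every edge of $G$ that crosses between $A'$ and $B'$ must have at least one endpoint in $C$ (as $A$–$B$ edges do not exist), and each such endpoint has degree at most $q$, the number of crossing edges is at most $q |C| \le q \beta f(p)$, which gives the claimed edge-separator size.

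The remaining point is to control the sizes $|A'|, |B'|$. Here I would split into two cases according to whether $p$ is "small" or "large" relative to the separator size. The clean way: distribute $C$ so as to balance, so that $\bigl||A'| - |B'|\bigr| \le \max(|C|, \bigl||A| - |B|\bigr|)$ — actually it is cleaner to just bound each side by $|A| + |C| \le \alpha p + \beta f(p)$ and then argue. But the target bound $\max(\tfrac{2}{3}, \alpha) p$ suggests we want to show each side is at most $\tfrac{2}{3} p$ when $\alpha$ is close to $\tfrac{1}{2}$, and otherwise at most $\alpha p$. The honest approach is: put the vertices of $C$ one at a time onto the smaller current side; the final imbalance is then at most $\max\bigl(\bigl||A|-|B|\bigr|, 1\bigr)$ if $C$ is nonempty, so both sides have size at most $\tfrac{p+1}{2} \le \tfrac{2}{3} p$ once $p \ge 3$; and if $C$ is empty then $(A, B)$ already works with parameter $\alpha$. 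Combining, each side is at most $\max(\tfrac{2}{3}, \alpha) p$.

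Actually, I expect the main subtlety — and the reason the statement carries $\max(\tfrac{2}{3}, \alpha)$ rather than just $\alpha$ — to be precisely this balancing: moving $C$ into $A$ and $B$ can push a side above $\alpha p$ if $\alpha$ is near $\tfrac{1}{2}$, so one must argue via the balanced greedy assignment that the excess is absorbed into the $\tfrac{2}{3}$ term, using $p \ge 3$ to make $\tfrac{p+1}{2} \le \tfrac{2}{3} p$. The case analysis should be: if $|C| = 0$ use $\alpha$ directly; if $|C| \ge 1$, greedily balance to get both sides $\le \lceil p/2 \rceil \le \tfrac{2}{3} p$. One should double-check the edge-case $p = \max(m,2)$ and that $p \ge 2$ suffices for the degree-bound step while $p \ge 3$ (hence $\max(m, 2)$ may need to be $\max(m, 3)$ — but since the lemma only claims $\max(m,2)$ and $m \ge 2$, the cases $p = 2$ must be handled separately, where $G$ has at most $2$ vertices and trivially splits into two singletons with no more than $q \cdot 1$... here one just checks it directly).

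Finally I would note monotonicity of $\mathcal{G}$ is used only to inherit the class structure (so that the resulting family is again a monotone class and the definition of edge-separability applies); no extra work is needed there. The whole argument is short: the content is the degree bound turning a size-$\beta f(p)$ vertex cut into a size-$q\beta f(p)$ edge cut, plus the elementary rebalancing of $C$.
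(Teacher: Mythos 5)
Your overall strategy is exactly the paper's: take the vertex separator $(A,B,C)$, redistribute the vertices of $C$ between the two sides to form $A'$ and $B'$, and observe that every $A'$--$B'$ edge must meet $C$ (since there are no $A$--$B$ edges), so the cut has at most $q|C|\le q\beta f(p)$ edges. That part is correct. The step that fails as written is your size bound in the case $C\neq\varnothing$: it is not true that greedy balancing yields both sides of size at most $\lceil p/2\rceil$ whenever $|C|\ge 1$. Take $p=100$, $\alpha=0.7$, $|A|=70$, $|B|=29$, $|C|=1$: the only thing greedy can do is put the single $C$-vertex into $B$, leaving sides of sizes $70$ and $30$. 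So the dichotomy cannot be on whether $C$ is empty; it must be on whether the \emph{larger final side} ever absorbed a vertex of $C$.

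The repair is short. If the larger final side received no vertex of $C$, it equals $A$ or $B$ and has size at most $\alpha p$. If it did receive one, consider the last $C$-vertex it was given: just before that step it was the smaller (or tied) side, so after the step its size exceeds the other side's final size by at most $1$; since it never grows again, its size is at most $\frac{p+1}{2}$, which is at most $\frac{2}{3}p$ for $p\ge 3$ and, by integrality, at most $1\le\frac{2}{3}\cdot 2$ for $p=2$. Either way each side is bounded by $\max\left(\frac{2}{3},\alpha\right)p$, so the threshold $\max(m,2)$ you worried about is already sufficient and no separate $p=2$ case is needed. With that one correction your argument coincides with the paper's proof.
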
 
\begin{proof}
We'll show how to obtain an edge separator from a vertex separator.

Let $G \in \mathcal{G}$, $|V(G)| = p \ge \max(m, 2)$. By the definition of vertex separability $V(G)$ can be divided into three sets $A, B, C$, where $C$ is a separator. Here $|A|, |B| \le \alpha p$, $|C| \le \beta f(p)$.

Let's move vertices from $C$ to $A$ and $B$ in a way to keep the sizes of the resulting sets as close to each other as possible. Let's denote the resulting sets by $A'$ and $B'$. Considering the way of constructing $A'$ and $B'$, we obtain $1 \le |A'|, |B'| \le \max \left(\frac{2}{3}, \alpha \right) \cdot p$.

Each edge connecting $A'$ and $B'$ is incident to at least one vertex from $C$. Since vertex degree is bounded by $q$, the total number of such edges does not exceed $q|C| \le q \beta f(p)$.

Since values $\max \left(\frac{2}{3}, \alpha \right)$, $q \beta$ and $\max(m, 2)$ do not depend on a graph, edge $f(p)$-separability of $\mathcal{G}$ is proved.
\end{proof}

Thereby when we define a class $\mathcal{G}(\lambda, q, \theta)$ it does not matter whether we use edge $\lambda$-separability or vertex $\lambda$-separability, as all graphs from the class have vertex degree bounded by $q$.

\subsection{Partitioning of $\lambda$-separable graphs}

Informally the key result of this section is the following statement. Since a $\lambda$-separable graph can be split into two disconnected parts of comparable size by removing a small number of edges, the graph can also be split into many disconnected parts of bounded size by removing a small number of edges.

The following lemma is the modification of lemma 1 from \cite{RDivision1} for planar graphs.

\begin{lemma}\label{LemSepDiv}
    Let $\mathcal{G}$ be a monotone class of edge $\lambda$-separable graphs with parameters $\alpha$, $\beta$ and $m$, where $0 < \lambda < 1$, $\frac{1}{2} \le \alpha < 1$, $\beta \ge 0$, $m \ge 2$. Then for each $r \ge m - 1$ and for each graph $G \in \mathcal{G}$ with $p$ vertices there exists partition of $G$ into subgraphs such that
	\begin{itemize}
		\item{The number of vertices in each subgraph does not exceed $r$.}
		\item{The total number of edges mutually connecting subgraphs does not exceed $\frac{\delta p r^{\lambda}}{r}$, where $\delta$ is a constant common for all graphs of the class and for all values of $r$.}
	\end{itemize}
\end{lemma}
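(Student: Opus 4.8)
The plan is to apply the edge $\lambda$-separability recursively, splitting $G$ into smaller and smaller pieces until every piece has at most $r$ vertices, and then to control the total number of cut edges by a careful accounting over the recursion tree.

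First I would set up the recursion. Given $G$ with $p$ vertices, if $p \le r$ we are done. Otherwise, since $r \ge m-1$ forces $p > r \ge m-1$, i.e. $p \ge m$, the class $\mathcal G$ lets us split $G$ into two subgraphs $G_1, G_2$ with at most $\alpha p$ vertices each and at most $\beta p^\lambda$ edges between them. I would then recurse on $G_1$ and $G_2$. Every leaf of the resulting binary recursion tree is a subgraph with at most $r$ vertices, which gives the first bullet; the edges connecting distinct final subgraphs are exactly the edges cut at the various internal nodes of the recursion tree, so it remains to bound $\sum_{v \text{ internal}} \beta (p_v)^\lambda$, where $p_v$ is the number of vertices of the subgraph at node $v$.

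To bound that sum I would group the internal nodes by ``level'', where level $i$ consists of the subgraphs whose size lies in the range $(\alpha^{i+1} p, \alpha^{i} p]$ — more robustly, I would bound the total vertex count at each depth. The key observations are: (i) at any fixed depth $t$ of the recursion tree the subgraphs are vertex-disjoint subgraphs of $G$, so their sizes sum to at most $p$; (ii) a node is internal only if its subgraph has more than $r$ vertices, so at depth $t$ there are at most $p/r$ internal nodes (actually at most $\min(2^t, p/r)$); and (iii) after $t = O(\log(p/r))$ levels all subgraphs have size $\le r$ and the recursion stops, because sizes shrink by the factor $\alpha<1$ at each step. On a level where the subgraphs have total size $\le p$ and there are $N_t$ of them, concavity of $x \mapsto x^\lambda$ (here $0<\lambda<1$) gives $\sum_v (p_v)^\lambda \le N_t (p/N_t)^\lambda = N_t^{1-\lambda} p^\lambda$. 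Summing over the $O(\log(p/r))$ levels and using $N_t \le p/r$ yields a bound of the shape $O\!\left(\log(p/r)\cdot (p/r)^{1-\lambda} p^\lambda\right) = O\!\left(\log(p/r)\cdot \frac{p\, r^\lambda}{r}\right)$ for the total number of cut edges.

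The main obstacle is that this naive level-by-level estimate carries a spurious $\log(p/r)$ factor, whereas the lemma demands the clean bound $\frac{\delta p r^\lambda}{r}$ with no logarithm. To remove it I would instead account geometrically: bound the contribution of the level whose subgraphs have size roughly $s$ (for $s$ ranging over the geometric sequence $p, \alpha p, \alpha^2 p, \dots$ down to $r$) by $\frac{p}{s}\cdot \beta s^\lambda = \beta p s^{\lambda-1}$, and note that since $\lambda - 1 < 0$ this is a geometric series in $s$ dominated by its last term $s \approx r$, giving $\sum \beta p s^{\lambda-1} = O(\beta p r^{\lambda-1}) = O\!\left(\frac{\beta p r^\lambda}{r}\right)$. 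Thus $\delta$ can be taken to be $\beta$ times a constant depending only on $\alpha$ (through the ratio of the geometric series, $\frac{1}{1-\alpha^{1-\lambda}}$), which is indeed common to all graphs in the class and all $r$. I would also need the minor technical check that when a subgraph has size strictly between $r$ and $m$ one cannot invoke separability — but $r \ge m-1$ means such a subgraph has size $\le r$ already and is a leaf, so this case does not arise; similarly the endpoints of the split and the edge counts are all well-defined because $\mathcal G$ is monotone and closed under taking the subgraphs produced. Assembling these pieces gives the stated partition.
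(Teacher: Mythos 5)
Your proposal is correct and follows essentially the same route as the paper's proof: recursive bisection via the edge separator, grouping the split subgraphs into geometric size bands (within which they are vertex-disjoint, so each band of size scale $s$ contains at most $O(p/s)$ subgraphs contributing $\beta s^{\lambda}$ cut edges each), and summing the resulting geometric series $\sum_s \beta p s^{\lambda-1}$, which is dominated by the $s\approx r$ term and yields $\delta = \beta\cdot O\bigl(\frac{1}{1-\alpha^{1-\lambda}}\bigr)$. The intermediate concavity-based estimate with the spurious $\log(p/r)$ factor is an unnecessary detour, but you correctly identify and replace it with the same geometric accounting the paper uses.
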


We call the corresponding partition of the graph as \emph{$r^{\lambda}$-partition}.

\begin{proof}
The proof of the lemma is similar to the proof of lemma 1 from \cite{RDivision1}. We provide the detailed version of the proof for completeness.

Let $r \ge m - 1$, $G \in \mathcal{G}$, $|V(G)| = p$. If $p \le r$, then the trivial partition containing a single graph $G$ suffices.

Let $p > r$. By the definition of edge $\lambda$-separability graph $G$ can be partitioned into two subgraphs $A$ and $B$ with no more than $\alpha p$ vertices each and no more than $\beta p^{\lambda}$ mutually connecting edges. Since $\mathcal{G}$ is a monotone class, $A, B \in \mathcal{G}$. Thus both $A$ and $B$ can be similarly partitioned into two subgraphs. Let's recursively partition all the subgraphs until we have only pieces with no more than $r$ vertices.

Let's prove that the obtained partition is a $r^{\lambda}$-partition.

The constraint on the number of vertices in subgraphs (no more than $r$ vertices per subgraph) is satisfied by the algorithm of partitioning.

Let $X$ be the total number of edges deleted during the algorithm. We prove an upper bound for $X$. Let's split all subgraphs partitioned at any step of the algorithm into sets $\mathcal{G}_i$ depending on the size of a subgraph. We include into $\mathcal{G}_1$ subgraphs with a size from a half-open interval $(r, r \alpha^{-1}]$. Similarly we include into $\mathcal{G}_2$ subgraphs with a size from a half-open interval $(r \alpha^{-1}, r \alpha^{-2}]$, and so on. If $t = \lceil \log_{\alpha} \frac{r}{p} \rceil$, the last set $\mathcal{G}_t$ includes subgraphs with a size from $(r \alpha^{-(t-1)}, r \alpha^{-t}]$.

Let $1 \le i \le t$. Consider the set $\mathcal{G}_i$. Note that vertex sets of distinct subgraphs from $\mathcal{G}_i$ do not intersect, since the ratio of sizes of such subgraphs is less than $\alpha$. Therefore the total size of all subgraphs in $\mathcal{G}_i$ does not exceed $p$. Hence $|\mathcal{G}_i| \le \frac{p}{r} \cdot \alpha^{i - 1}$. At the same time the total number of edges deleted when partitioning a graph from $\mathcal{G}_i$ does not exceed $\beta (r / \alpha^i)^{\lambda}$.

By summing over all subgraphs from sets $\mathcal{G}_i$ we obtain
\[
    X \le \beta \sum_{i=1}^t \frac{\alpha^{i - 1} p}{r} \left(\frac{r}{\alpha^i}\right)^{\lambda} \le \frac{\beta}{\alpha^{\lambda} (1 - \alpha^{1 - \lambda})} \cdot \frac{p r^{\lambda}}{r}.
\]
This matches the constraint on the number of edges mutually connecting subgraphs of a $r^{\lambda}$-partition.
\end{proof}

We use the following auxiliary notation below. Let $M, S > 0$. We denote by $K(M,S)$ the number of tuples $(x_1, \dots, x_t)$ satisfying the condition
\begin{equation}\label{K_M_S}
1 \le x_i\le M,\qquad \sum_{i=1}^t x_i\le S.
\end{equation}
If $\mathcal{G}$ is a monotone $\lambda$-separable class of graphs with vertex degree bounded by $q$, the properties of $r^{\lambda}$-partition may be stated as follows. Let $\bar{p} = \{p_i\}_{i=1}^t$ be a tuple of sizes of $r^{\lambda}$-partition subgraphs, and let $\bar{s} = \{s_i\}_{i=1}^t$ be a tuple of numbers of edges connecting $r^{\lambda}$-partition subgraphs with the rest of the graph. Then 
\begin{equation}\label{K_partition}
\bar{p} \in K (r, p),\qquad \bar{s} \in K \left(qr, \frac{\delta p r^{\lambda}}{r}\right).
\end{equation}

\section{Lower bound for $\lambda$-separable supports}\label{SecPolSepGraph}

In this section we prove the key result of this paper namely a theorem on the lower bound for all supports from classes $\mathcal{G}(\lambda, q, \theta)$.
\[
	L(M_k^T, n) \gtrsim \max \left(\frac{2^n}{n}, \frac{2^n (1 - \lambda)}{\log k} \right).
\]
Note that the lower bound depends only on the separability function. Parameters $q$ and $\theta$ do not affect the lower bound.

Substantively the proof is obtained in the following way. We partition a subgraph of a support into small fragments. Then we bound the number of Boolean functions computable in the subgraph by the number of Boolean functions computable in the fragments and the number of ways to conduct wires between the fragments.

Since the proof is technically involved we prove several auxiliary lemmas in a separate section \ref{SecPolSepGraphAux}. The proof of the main theorem is finished in section \ref{SecPolSepGraphMain}.

\subsection{Auxiliary lemmas}\label{SecPolSepGraphAux}

The following lemma is an immediate corollary of a classic lemma \cite{Chashkin1}.

\begin{lemma}[\cite{Chashkin1}, p. 198--200]\label{LemL2}
    Let $N(n, m, L)$ be the number of Boolean functions with no more than $n$ inputs, no more than $m$ outputs and the complexity not greater than $L$. Then there exists a constant $c$ such that 
\[
	N(n, m, L) \le \bigl(c(n + L) \bigr)^{n + m + L}.
\]
\end{lemma}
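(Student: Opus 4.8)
The plan is to prove Lemma~\ref{LemL2} by a direct counting argument on the combinatorial descriptions of Boolean circuits of bounded complexity. First I would recall that, by the classic result cited from \cite{Chashkin1}, every Boolean function of $n$ inputs and $m$ outputs computable with complexity at most $L$ is computed by some circuit in the standard basis with at most $L$ internal gates, where each gate has bounded fan-in (say fan-in at most $2$ after reducing to a fixed finite basis). So it suffices to bound the number of such labeled circuits, since the map from circuits to the functions they compute is surjective onto the set we are counting.

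Next I would set up the encoding. A circuit with at most $n$ inputs and at most $L$ gates has at most $n+L$ ``source'' vertices from which a wire can originate (inputs together with gate outputs). Each of the $L$ gates is specified by: (i) a choice of basis function from the fixed finite basis $\{\vee,\&,\neg\}$ (a constant number $O(1)$ of choices), and (ii) an ordered choice of its $\le 2$ input wires, each selected among the $\le n+L$ possible sources, which is at most $(n+L)^2$ choices; so each gate contributes at most $c_0(n+L)^2 \le (c_1(n+L))^2$ descriptions for a suitable constant. The $m$ outputs are specified by choosing, for each output, one of the $\le n+L$ sources, contributing at most $(n+L)^m$ descriptions. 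Multiplying, the total number of circuit descriptions is at most
\[
	\bigl(c_1(n+L)\bigr)^{2L}\cdot (n+L)^{m}\cdot (\text{const})^{L}\le \bigl(c(n+L)\bigr)^{n+m+L}
\]
for an appropriate absolute constant $c$, where the extra slack in the exponent ($n+m+L$ versus $2L+m$) absorbs the constant factors and any corner cases where $L$ is small relative to $n$ (e.g.\ one may assume $L\ge$ something, or note $(c(n+L))^{n}\ge 1$ trivially). Since every function in question arises from at least one such description, $N(n,m,L)$ is bounded by this quantity.

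The only real subtlety — and the step I would be most careful about — is the reduction to circuits of bounded fan-in over a fixed finite basis with the number of gates still at most $L$ (up to a constant factor), and making sure the constant $c$ chosen at the end is genuinely absolute (independent of $n,m,L$). This is exactly what the cited pages of \cite{Chashkin1} supply, so in the write-up I would simply invoke that normal form and then perform the crude product bound above, choosing $c$ large enough that all the accumulated multiplicative constants and the gap between $2L+m$ and the claimed exponent $n+m+L$ are comfortably absorbed. No delicate estimate is needed beyond this bookkeeping.
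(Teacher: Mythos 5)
The paper itself offers no proof of this lemma---it is quoted directly from \cite{Chashkin1}---so the only question is whether your argument actually establishes the stated bound, and it does not. Your product bound counts \emph{labeled} circuit descriptions and yields roughly $\bigl(c_1(n+L)\bigr)^{2L}(n+L)^{m}$, i.e.\ an exponent of about $2L+m$. You then assert this is at most $\bigl(c(n+L)\bigr)^{n+m+L}$ because of ``extra slack in the exponent,'' but the slack goes the wrong way: $n+m+L\ge 2L+m$ holds only when $L\le n$, whereas the regime that matters in this paper (and the only nontrivial one) is $L\gg n$, e.g.\ $L$ of order $2^{n}/n$. For fixed $n$ and $L\to\infty$ the ratio $\bigl(c_1(n+L)\bigr)^{2L}\big/\bigl(c(n+L)\bigr)^{n+L}$ tends to infinity for any constants $c_1,c$, so your final inequality is simply false. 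The ``corner cases where $L$ is small relative to $n$'' that you flag as the delicate ones are in fact the easy ones.

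The missing idea is the standard quotient by gate relabelings: circuits that differ only by a permutation of the names of the $L$ gates compute the same function, so the number of functions is at most the number of labeled descriptions divided by $L!$. With $L!\ge (L/e)^{L}$ this gives
\[
N(n,m,L)\le \left(\frac{e\,c_0(n+L)^{2}}{L}\right)^{L}(n+L)^{m},
\]
and since $\frac{(n+L)^{2}}{L}=(n+L)\bigl(1+\tfrac{n}{L}\bigr)$ and $\bigl(1+\tfrac{n}{L}\bigr)^{L}\le e^{n}$, the right-hand side is at most $\bigl(c'(n+L)\bigr)^{L}\,e^{n}\,(n+L)^{m}\le \bigl(c(n+L)\bigr)^{n+m+L}$ for a suitable absolute constant $c$; here the factor $e^{n}$ is exactly what the extra $n$ in the exponent is there to absorb. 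The rest of your write-up (reduction to a finite bounded-fan-in basis, counting the $n+L$ sources, choosing the $m$ outputs) is fine and matches the standard argument behind the cited result.
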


We use $r^{\lambda}$-partitioning of support subgraphs to obtain the lower bound. Technical lemma \ref{LemL5} bounds the number of Boolean operators computable in fragments of a $r^{\lambda}$-partition.

Let $p$ and $s$ be positive integers. Denote by $Z(p, s)$ the number of Boolean functions with no more than $s$ inputs and outputs in total and the complexity not greater than $p$.

Recall the notation $K(M,S)$ introduced in section \ref{SecSep} for sets of tuples satisfying conditions \eqref{K_M_S}. When proving lemma \ref{LemL5} we use the following simple property:
\begin{equation}\label{EqKProp}
    \mbox{If}\quad \bar{x}=\{x_i\}_{i=1}^t \in K(M,S), \quad\mbox{then}\quad \sum_{i=1}^t x_i\log x_i\le S\log M.
\end{equation}

We also use the following inequality for non-negative $x$ and $y$:
\begin{equation}
   (x+y)\log(x+y) \le x\log x + y\log y + x + y, \label{EqLogSum}
\end{equation}
which under the assumption $0 \log 0 = 0$ is a corollary of the binary entropy bound $-a\log a-(1-a)\log(1-a)\le 1$ for $a=\frac{x}{x+y}$.

\begin{lemma}\label{LemL5}
    Let $q \ge 1$, $b \ge 0$, $d > 0$ be constants and let $k \to \infty$ be a parameter. Denote $r = (k \log k)^d$. Let $L$, $M$ be numbers and $\bar{p} = \{p_i\}_{i=1}^t$, $\bar{s} = \{s_i\}_{i=1}^t$, $\bar{u}=\{u_i\}_{i=1}^t$ be tuples satisfying the following conditions:
\begin{gather}
	\bar{p}\in K(r,L),\qquad \bar{s}\in K\left(qkr,\frac{bL}{\log k}\right),\qquad u_i \ge 0,\qquad \sum_{i=1}^t u_i \le M.\label{CondKLem5}
\end{gather}
Then
\begin{equation}\label{DefLemL5}
	\sum_{i=1}^t \log Z(p_i, s_i+u_i) \le d\Biggl(1+O\biggl(\frac{\log\log k}{\log k}\biggr)\Biggr)L \log k+M\log M+O(M).
\end{equation}
\end{lemma}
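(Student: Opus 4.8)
The goal is a bound on $\sum_i \log Z(p_i, s_i+u_i)$, and the natural first move is to unpack $Z(p,s)$ via Lemma~\ref{LemL2}. Since $Z(p,s)$ counts Boolean operators with at most $s$ inputs and outputs in total and complexity at most $p$, taking $n = m = s$ (or bounding both by $s$) gives $\log Z(p,s) \le (2s + p)\log\bigl(c(s+p)\bigr)$ up to constants. So the plan is: first, write $\log Z(p_i, s_i+u_i) \le \bigl(p_i + 2(s_i+u_i)\bigr)\log\bigl(c(p_i + s_i + u_i)\bigr)$; second, split the sum into the ``$p_i \log(\cdot)$'' part, the ``$s_i \log(\cdot)$'' part, and the ``$u_i \log(\cdot)$'' part, and estimate each using the constraints \eqref{CondKLem5} together with the auxiliary inequalities \eqref{EqKProp} and \eqref{EqLogSum}.

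For the $p_i$-part, since $\bar p \in K(r,L)$ we have $p_i \le r = (k\log k)^d$, so $\log(c(p_i+s_i+u_i)) \le \log(c\,p_i) + \log(1 + (s_i+u_i)/p_i)$; the first term summed against $p_i$ is controlled by $\sum p_i \log p_i \le L \log r = dL\log(k\log k) = dL\log k\,(1 + O(\log\log k/\log k))$ via \eqref{EqKProp}, which is exactly the main term on the right-hand side of \eqref{DefLemL5}. For the $s_i$-part, $\bar s \in K(qkr, bL/\log k)$ gives $s_i \le qkr$ so $\log(c(p_i+s_i+u_i)) = O(\log(kr)) = O(\log k)$ (since $\log r = O(\log k)$ with the $d,q$ absorbed into the constant), and $\sum s_i \le bL/\log k$, so this whole part is $O\bigl((L/\log k)\cdot \log k\bigr) = O(L)$, which is absorbed into the $O(\log\log k/\log k)\cdot L\log k$ error term (indeed $O(L) = L\log k \cdot O(1/\log k)$). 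For the $u_i$-part, this is where $\sum u_i \le M$ enters: $\sum u_i \log(c(p_i+s_i+u_i))$. The idea is $p_i + s_i + u_i \le r + qkr + u_i \le (qk+1)r + u_i$, so $\log(c(p_i+s_i+u_i)) \le \log(c(qk+1)r) + \log(1 + u_i/((qk+1)r))$. Summing the first term against $u_i$ gives $O(M\log(kr)) = O(M\log k)$ --- but wait, that is too big; the claimed bound only allows $M\log M + O(M)$. Resolving this is the crux.

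\textbf{The main obstacle.} The delicate point is the $u_i$-contribution: a crude bound $\sum u_i \log(p_i+s_i+u_i)$ would produce an unwanted $M\log(kr)$ term, whereas \eqref{DefLemL5} permits only $M\log M + O(M)$ plus absorption into the $L\log k$ term. The fix must exploit that $u_i$ only contributes a genuine $\log u_i$ growth and otherwise can be charged against either $M$ or $L\log k$. Concretely I would bound, for each $i$, $(p_i + 2(s_i+u_i))\log(c(p_i+s_i+u_i))$ by splitting on whether $u_i$ dominates $p_i + s_i$ or not. When $u_i \le p_i + s_i$, the term $u_i\log(\cdot)$ is at most $(p_i+s_i)\log(\cdot)$, already covered by the $p_i$ and $s_i$ analyses above (with an extra factor $2$ or $3$, harmless to the constant in the error term). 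When $u_i > p_i + s_i$, then $p_i + s_i + u_i \le 2u_i$, so $\log(c(p_i+s_i+u_i)) \le \log(2cu_i) = \log u_i + O(1)$, and $(p_i + 2(s_i+u_i)) \le 3u_i$, so the term is at most $3u_i\log u_i + O(u_i)$; summing over $i$ and applying \eqref{EqLogSum} repeatedly (or convexity/the entropy bound) to $\sum u_i \log u_i$ with $\sum u_i \le M$ yields $\sum u_i \log u_i \le M\log M$ (using $0\log 0 = 0$ and the standard fact that $\sum u_i\log u_i$ is maximized, subject to $\sum u_i \le M$, when the mass is concentrated, giving $\le M\log M$), hence this part contributes $3M\log M + O(M)$. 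Absorbing the factor $3$ into the statement's $M\log M + O(M)$ requires a slightly more careful split (e.g. threshold $u_i$ against $L^{10}$ or against $p_i+s_i$ with the right constants) so that the genuine $\log u_i$ growth appears with coefficient exactly $1$ after using $\log u_i \le \log M$ only on the ``large'' indices --- this bookkeeping, making sure each of the three regimes lands in the correct bucket with the correct constant, is the technically fussy heart of the lemma, but it is routine once the trichotomy $p_i$ vs.\ $s_i$ vs.\ $u_i$ is set up.
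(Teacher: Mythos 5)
Your overall architecture matches the paper's: apply Lemma~\ref{LemL2}, take logarithms, and bound the resulting sum via the $K(\cdot,\cdot)$ constraints; your treatment of the $p_i$- and $s_i$-contributions is exactly right. But the step you yourself flag as the crux --- extracting the $u_i$-contribution with coefficient exactly $1$ on $M\log M$ --- is not actually resolved, and the repair you sketch does not work. A case split at $u_i > C(p_i+s_i)$ for a fixed constant $C$ gives $p_i+s_i+u_i \le (1+1/C)u_i$ and hence at best $(1+1/C)M\log M$ from the large-$u_i$ indices, which is not $M\log M+O(M)$; letting $C\to\infty$ to kill the $1/C$ forces the small-$u_i$ regime to charge up to $C(p_i+s_i)\log(\cdot)$ against the $p_i$-part, inflating the main term $L\log r$ by a factor of order $C$. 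No choice of threshold escapes this trade-off, so the ``routine bookkeeping'' you defer is not routine within your case-analysis framework. (A secondary, self-inflicted loss: Lemma~\ref{LemL2} with $n+m\le s_i+u_i$ in total gives exponent $p_i+s_i+u_i$, not $p_i+2(s_i+u_i)$; the spurious factor $2$ worsens your constants further.)

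The missing idea is to apply inequality \eqref{EqLogSum} not to $\sum_i u_i\log u_i$ at the end (where the trivial bound $\log u_i\le\log M$ already suffices) but to the term $(p_i+s_i+u_i)\log(p_i+s_i+u_i)$ itself: two applications give
\[
(p_i+s_i+u_i)\log(p_i+s_i+u_i)\le p_i\log p_i+s_i\log s_i+u_i\log u_i+2(p_i+s_i+u_i),
\]
i.e.\ the decomposition into the three parts costs only an additive linear term, with no multiplicative loss and no $\log k$ appearing in the $u$-part. This is exactly what the paper does; afterwards $\sum_i p_i\log p_i\le L\log r$, $\sum_i s_i\log s_i\le \frac{bL}{\log k}\log(qkr)=O(L)$, and $\sum_i u_i\log u_i\le M\log M$ follow from \eqref{EqKProp} and $u_i\le M$, and the linear leftovers are $O(L+M)$, absorbed as you describe. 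So the gap is a single misplaced application of a tool you already cite; as written, your argument proves the lemma only with $M\log M$ replaced by $cM\log M$ for some constant $c>1$.
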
 
\begin{proof}

Using lemma \ref{LemL2}, we have
\[
	Z(p_i, s_i+u_i) \le \bigl(c (p_i + s_i + u_i) \bigr)^{p_i + s_i + u_i}.
\]
Taking the logarithm and summing by all tuple elements, we obtain
\begin{equation}\label{Eq0LemL5}
	\sum_{i=1}^t \log  Z(p_i, s_i+u_i) \le \sum_{i=1}^t (p_i + s_i + u_i) (\log c + \log (p_i + s_i + u_i)).
\end{equation}
Using twice \eqref{EqLogSum}, then \eqref{EqKProp} with \eqref{CondKLem5}, we bound the right side of \eqref{Eq0LemL5}:
\begin{align*}
  \sum_{i=1}^t &(p_i + s_i + u_i) (\log c + \log (p_i + s_i + u_i)) \le\\
  &\le \sum_{i=1}^t(p_i\log p_i+s_i\log s_i + u_i\log u_i + (p_i+s_i+u_i)(\log c+2)) \le\\
  &\le L\log r+\frac{bL}{\log k}\underbrace{\log(qkr)}_{O(\log k)}+M\log M+\underbrace{(\log c+2)\left(L+\frac{bL}{\log k}+M\right)}_{O(L+M)} =\\
  &=L\log r + M\log M + O(L+M).
\end{align*}
Substituting the bound into, \eqref{Eq0LemL5} we obtain
\begin{align*}
    \sum_{i=1}^t \log  Z(p_i, s_i + u_i) &\le L\log r+M\log M+O(L+M)=\\
    &=Ld(\log k+\log \log k) + M\log M + O(L+M)=\\
    &=\Biggl(1+O\biggl(\frac{\log\log k}{\log k}\biggr)\Biggr)Ld\log k+M\log M+O(M).
\end{align*}
\end{proof}

The following two lemmas allow to obtain a trivial lower bound for Shannon function for circuits with arbitrary support.

\begin{lemma}[\cite{Chashkin1}, theorem 11.5]\label{LemN2}
    For each constant $\epsilon > 0$ the ratio of Boolean functions of $n$ variables satisfying the inequality
\[
	L(f) \ge (1 - \epsilon) \frac{2^n}{n},
\]
    approaches $1$ as $n \to \infty$.
\end{lemma}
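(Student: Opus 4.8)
The plan is to deduce this from the circuit-counting estimate of Lemma \ref{LemL2} together with the elementary fact that there are exactly $2^{2^n}$ Boolean functions of $n$ variables. Fix $\epsilon>0$. A function $f\in B_n$ violates the inequality $L(f)\ge(1-\epsilon)\frac{2^n}{n}$ precisely when $L(f)\le L_0$, where $L_0=\bigl\lceil(1-\epsilon)\frac{2^n}{n}\bigr\rceil-1$. Hence it suffices to show that the number of such functions is $o(2^{2^n})$: the ratio of the functions satisfying the inequality then tends to $1$.

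First I would bound the number of violating functions by $N(n,1,L_0)$ and apply Lemma \ref{LemL2} with $m=1$, which gives $N(n,1,L_0)\le\bigl(c(n+L_0)\bigr)^{n+1+L_0}$. Taking the base-two logarithm,
\[
\log N(n,1,L_0)\le(n+1+L_0)\bigl(\log c+\log(n+L_0)\bigr).
\]
Then comes the asymptotic bookkeeping. Since $L_0\le(1-\epsilon)\frac{2^n}{n}$ and $n=o\bigl(\frac{2^n}{n}\bigr)$, for all large $n$ we have $n+L_0\le\frac{2^n}{n}$, whence $\log(n+L_0)\le n-\log n+O(1)$; moreover $n+1+L_0=L_0\bigl(1+o(1)\bigr)$. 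Multiplying these out yields $\log N(n,1,L_0)\le(1-\epsilon)\bigl(1+o(1)\bigr)2^n$, which for large $n$ is at most $(1-\tfrac{\epsilon}{2})2^n$. Therefore the ratio of violating functions is at most $2^{(1-\epsilon/2)2^n}/2^{2^n}=2^{-(\epsilon/2)2^n}\to0$, and the complementary ratio tends to $1$, as claimed.

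I do not expect a genuine obstacle here: this is the classical Shannon-style counting argument, and Lemma \ref{LemL2} carries out the substantive part. The only point requiring care is that the exponent in Lemma \ref{LemL2} is $(1+o(1))L_0$, so that the dominant term of $\log N(n,1,L_0)$ is $(1-\epsilon)2^n$ rather than a larger multiple of $2^n$; this is exactly what allows the estimate to beat $2^n$ for every $\epsilon>0$, and hence what makes the threshold $(1-\epsilon)\frac{2^n}{n}$ (and not merely a constant fraction of it) achievable.
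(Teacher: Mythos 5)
Your argument is correct. Note, however, that the paper does not prove this lemma at all --- it is imported by citation from Chashkin's textbook (Theorem 11.5), so there is no in-paper proof to compare against. Your derivation is the standard Shannon-style counting argument: bound the number of functions of complexity at most $L_0=\bigl\lceil(1-\epsilon)\frac{2^n}{n}\bigr\rceil-1$ by $N(n,1,L_0)$ via Lemma~\ref{LemL2}, observe that $\log N(n,1,L_0)\le(1+o(1))L_0\cdot n\le(1-\tfrac{\epsilon}{2})2^n$ for large $n$, and compare with the total count $2^{2^n}$. The asymptotic bookkeeping (in particular, that $n+L_0\le\frac{2^n}{n}$ eventually, so that $\log(n+L_0)\le n-\log n$ and the loss of the $\log n$ term is exactly compensated by the $1/n$ factor in $L_0$) is handled correctly, and this is indeed the only delicate point: a cruder bound of the form $\log N\le C\cdot L_0\,n$ with $C>1$ would not suffice to reach the threshold $(1-\epsilon)\frac{2^n}{n}$ for every $\epsilon>0$. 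Your proof is a legitimate self-contained replacement for the citation.
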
 

\begin{lemma}\label{LemN3}
	Let $T$ be an arbitrary support, $k \in \mathbb{N}$, $n\to\infty$. Then
\[
	L(M_k^T, n) \gtrsim \frac{2^n}{n}.
\]
\end{lemma}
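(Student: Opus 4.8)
\textbf{Proof proposal for Lemma \ref{LemN3}.}

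The plan is to derive this as a direct consequence of the counting Lemma \ref{LemN2} together with the trivial observation that the complexity of a multilayer circuit (the number of support vertices used) is at least the number of non-trivial gates of the underlying Boolean circuit, which in turn is at least the ordinary circuit complexity $L(f)$ up to a constant. More precisely: if a multilayer circuit $(S,h) \in M_k^T$ computes $f$, then by Constraint \ref{CondVertex} distinct non-constant gates of $S$ occupy distinct vertices of $T$, so the complexity of $(S,h)$ is at least the number of non-constant gates of $S$. Since a Boolean circuit computing $f$ over a standard basis has at least $\Omega(L(f))$ gates (indeed the gate count and the standard complexity measure $L(f)$ differ only by a constant factor depending on the basis, and one may even take them equal for the basis used in Lemma \ref{LemN2}), we get $L(M_k^T, f) \gtrsim L(f)$ for every $f$. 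Note this lower bound holds for \emph{any} support $T$ and \emph{any} $k$, since it uses only Constraint \ref{CondVertex}.

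The key steps, in order, are: first, fix $\epsilon > 0$; second, apply Lemma \ref{LemN2} to conclude that for all sufficiently large $n$ there exists (in fact almost every) $f \in B_n$ with $L(f) \ge (1-\epsilon)\frac{2^n}{n}$; third, for such an $f$ and any circuit in $M_k^T$ computing it, invoke Constraint \ref{CondVertex} to bound the circuit complexity below by the number of non-trivial gates, hence below by $L(f) \ge (1-\epsilon)\frac{2^n}{n}$ (absorbing any basis-dependent constant into $\epsilon$ by taking $n$ large, or noting that for the basis in Lemma \ref{LemN2} gate count equals the complexity measure); fourth, take the maximum over $f \in B_n$ to get $L(M_k^T, n) \ge (1-\epsilon)\frac{2^n}{n}$ for all large $n$; fifth, since $\epsilon > 0$ was arbitrary, conclude $\varliminf_{n\to\infty} \frac{L(M_k^T,n)}{2^n/n} \ge 1$, which is exactly $L(M_k^T, n) \gtrsim \frac{2^n}{n}$.

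The only delicate point is matching the complexity measure: Lemma \ref{LemN2} is stated in terms of $L(f)$, the standard Boolean circuit complexity, while the complexity of a multilayer circuit counts \emph{support vertices used}, which corresponds to the number of \emph{non-trivial} (non-constant) gates in $S$. These quantities agree up to the conventions of \cite{Chashkin1} — gates computing constants carry no cost and can be eliminated or are not counted — so the inequality ``number of non-trivial gates of $S$ $\ge L(f)$'' holds whenever $S$ computes $f$. I expect this bookkeeping about what exactly is being counted to be the main (though minor) obstacle; everything else is an immediate combination of Constraint \ref{CondVertex} with Lemma \ref{LemN2}, and no properties of $T$, of $k$, or of separability are needed at all.
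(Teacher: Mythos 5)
Your proposal is correct and follows exactly the paper's own (one-line) argument: combine Lemma \ref{LemN2} with the observation that the complexity of a multilayer circuit is at least the size of the underlying Boolean circuit, then let $\epsilon \to 0$. The extra bookkeeping you supply about non-trivial gates versus $L(f)$ is a reasonable elaboration of what the paper leaves implicit, not a different route.
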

\begin{proof}
The lemma is an immediate corollary of lemma \ref{LemN2} and the fact that the complexity of a multilayer circuit is not less than the size of the corresponding Boolean circuit.
\end{proof}

\subsection{The lower bound theorem}\label{SecPolSepGraphMain}

In this section we finish the proof of lower asymptotic bound for $L(M_k^T, n)$, where $T \in \mathcal{G}(\lambda, q, \theta)$. We also prove a corollary allowing to obtain a lower bound for supports having separability function of more general type, e.\,g. $\log p$, $\sqrt p \log \log p$, etc.

\begin{lemma}\label{LemLowerBoundCount}
    Let $T \in \mathcal{G}(\lambda, q, \theta)$. Let $N_k^T(n, m, L)$ be the number of Boolean functions in $B_{n,m}$ computable by $k$-layer circuits in $T$ with size not greater than $L$. Then as $k \to \infty$, the following inequality holds:
\begin{align*}
	\log N_k^T(n, m, L) &\le \frac{L \log k}{1 - \lambda}\Biggl(1+O\biggl(\frac{\log\log k}{\log k}\biggr)\Biggr) +\\
	&+ (n + m) \bigl(\log L + \log (n + m)\bigr) + O(n + m).
\end{align*}
\end{lemma}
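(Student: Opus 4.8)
The plan is to count the Boolean operators in $B_{n,m}$ computable by a $k$-layer circuit in $T$ with at most $L$ vertices by decomposing the support subgraph carrying the circuit into small fragments via an $r^\lambda$-partition, and then bounding the number of global operators by the product of (i) the number of ways to choose the underlying subgraph of $T$, (ii) the number of ways to partition it and wire the fragments together, and (iii) the number of local operators computable inside each fragment. First I would fix $d = \frac{1}{1-\lambda}$ and set $r = (k\log k)^d$; then $r^\lambda / r = r^{-(1-\lambda)} = (k\log k)^{-1}$, so by Lemma~\ref{LemSepDiv} the chosen subgraph of size $\le L$ admits an $r^\lambda$-partition with fragment sizes $\bar p \in K(r, L)$ and cut-edge counts (counting wire multiplicity up to $k$) $\bar s \in K\!\left(qkr, \frac{\delta L r^\lambda}{r}\right) = K\!\left(qkr, \frac{\delta L}{\log k}\right)$, which is exactly the shape required by the hypotheses of Lemma~\ref{LemL5} with $b = \delta$.

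Next I would account for the external interface: the $n$ inputs and $m$ outputs of the global operator must be assigned to fragments, contributing an extra tuple $\bar u = \{u_i\}$ with $u_i \ge 0$ and $\sum u_i \le n+m =: M$, so that fragment $i$ realizes a local operator with at most $s_i + u_i$ inputs-plus-outputs total and complexity at most $p_i$; the number of such local operators is $Z(p_i, s_i + u_i)$. The global operator is then determined by: the choice of subgraph of $T$ on $\le L$ vertices (at most $\theta^L$ choices, using the second defining property of $\mathcal{G}(q,\theta)$, with a harmless polynomial-in-$L$ factor for the exact vertex count); the partition data and the wiring pattern between fragments and to the external interface (bounded by a function of $\bar p, \bar s, \bar u$ and the degree bound $q$, again absorbed into lower-order terms since each fragment touches $O(qkr)$ wire-endpoints and $r, k$ are sub-exponential in $L$); and the tuple of local operators. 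Taking logarithms, the dominant term is $\sum_i \log Z(p_i, s_i+u_i)$, which Lemma~\ref{LemL5} bounds by $d\bigl(1 + O(\tfrac{\log\log k}{\log k})\bigr) L \log k + M\log M + O(M)$. Since $d = \frac{1}{1-\lambda}$ and $M = n+m$, this gives the claimed $\frac{L\log k}{1-\lambda}\bigl(1+O(\tfrac{\log\log k}{\log k})\bigr) + (n+m)(\log L + \log(n+m)) + O(n+m)$, once one checks that $\log\theta^L = O(L)$ and the combinatorial wiring factors are $O(L\log k \cdot \tfrac{\log\log k}{\log k}) = o(L\log k)$ or smaller, hence swallowed by the error term.

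I would organize the bookkeeping as follows: (1) reduce to a fixed subgraph $G$ of $T$ with exactly $L' \le L$ vertices and pay the $\theta^{L}$ and $\mathrm{poly}(L)$ factors up front; (2) invoke Lemma~\ref{LemSepDiv} with the chosen $r$ to get the partition and record \eqref{K_partition}-type membership of $\bar p$ and the multiplicity-weighted $\bar s$ in the relevant $K(\cdot,\cdot)$ sets; (3) introduce $\bar u$ for the external interface and argue that a global operator is reconstructible from (subgraph, partition, inter-fragment and interface wiring, local operators); (4) bound the number of wiring patterns — this is where one uses that each cut carries $\le k$ wires per edge and $\le qkr$ endpoints per fragment, giving a $\log$-count of the form $\sum_i s_i \log(qkr) = O\!\left(\frac{L}{\log k}\cdot \log k\right) = O(L)$ plus interface terms $O(M\log(qkr)) = O(M\log k)$ — wait, I should be careful here; (5) apply Lemma~\ref{LemL5} to the local-operator term and collect.

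The main obstacle I anticipate is step (4): controlling the number of ways to specify the wiring between fragments (and between fragments and the global interface) so that it is genuinely absorbed into the stated error term and does not corrupt the leading constant $\frac{1}{1-\lambda}$. One must be careful that a wire of the Boolean circuit crossing a cut edge is specified by choosing, for each of the $\le k$ slots on that edge, the source and target fragments and the port indices within them; summed over all cut edges this is a count whose logarithm is $O\bigl(\tfrac{L}{\log k} \cdot \log(qkr)\bigr) = O(L)$ since $\log(qkr) = O(\log k)$, which is indeed $o(L\log k)$ and even $O(L)$, hence harmless — but the interface assignment of the $n+m$ external terminals must be folded into the $u_i$ variables rather than counted separately, or else one risks an unwanted $O(M\log k)$ term instead of the cleaner $M\log M + O(M)$. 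Getting this accounting exactly right, and confirming that $\log\theta^L$, the partition-structure count, and the port-permutation counts are all simultaneously $o(L\log k)$ (equivalently, $O(L)$ or $O(L\log\log k \cdot (\log k)^{-1}\cdot\log k)$), is the delicate part; everything else is a direct substitution into Lemma~\ref{LemL5}.
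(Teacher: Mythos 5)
Your proposal follows essentially the same route as the paper's proof: the same choice $r=(k\log k)^{1/(1-\lambda)}$, the same reconstruction of an operator from the tuple (subgraph, interface placement, inter-fragment wiring, local operators in the fragments of the $r^\lambda$-partition), and the same final application of Lemma~\ref{LemL5} with $b=\delta$ and $M=n+m$; the wiring-count worry you flag in step (4) is handled in the paper even more crudely (three states per potential cut wire, giving $\log A_3 = O(L/\log k)$), and the interface term $(n+m)\log L$ comes from counting the $L^{n+m}$ placements of terminals, exactly as your final formula requires. The argument is correct as outlined.
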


\begin{proof}
Let's denote $r = (k \log k)^{\frac{1}{1 - \lambda}}$. We consider only great enough values of $k$ to suffice the conditions on $r$ from lemma \ref{LemSepDiv}. Thus every finite subgraph $G$ of the support $T$ has a $r^{\lambda}$-partition which we denote by $P(G)$.

We can build a mapping between $k$-layer circuits of size not greater than $L$ computing a Boolean function in $B_{n,m}$ and tuples of the following objects:
\begin{enumerate}
    \item Subgraph $G$ of the support where we embed the corresponding Boolean circuit.
    \item A tuple $\bar v$ of vertices of $G$ where we embed inputs and outputs of the Boolean circuit.
    \item A set of directed wires between fragments of $P(G)$.
    \item A tuple of Boolean functions computed in fragments of $P(G)$.
\end{enumerate}

It is easy to see that circuits computing different Boolean functions are mapped to different tuples. Thus we can bound the number of Boolean functions by the number of possible tuples. Obviously we can bound the number of elements per each tuple item and find the product of the bounds.

Denote the corresponding upper bounds as $A_1, A_2, A_3, A_4$ (here the correspondence is defined by the order of items above).

As $T \in \mathcal{G}(\lambda, q, \theta)$, we have
$$A_1 \le \theta + \theta^2 + \dots + \theta^L \le \frac{\theta^{L+1}}{\theta - 1}.$$

Let's bound $A_2$. Obviously there exist $L^{n+m}$ tuples of $n+m$ vertices of $G$. Hence $A_2 \le L^{n+m}$. 

The number of edges between the fragments of the $r^{\lambda}$-partition $P(G)$ is bounded by $\frac{\delta Lr^{\lambda}}{r} = \frac{\delta L}{k \log k}$, where $\delta$ is a constant. In each of these edges we can conduct no more than $k$ wires. Thus there can be no more than $\frac{\delta L}{\log k}$ wires mutually connecting the fragments of $P(G)$. For each of these wires there are three options: directed in one way, directed in the opposite way and missed. Therefore
$$A_3 \le 3^{\frac{\delta L}{\log k}}.$$

We introduce the following notation to obtain the bound for $A_4$. Let $G_i$ be the fragments of 
$P(G)$, $t$ be the number of the fragments, $p_i$ be the number of vertices in $i$-th fragment, $s_i$ be the number of wires that can be conducted outside from $G_i$, $u_i$ be the total number of inputs and outputs of the Boolean circuit embedded into $G_i$ (i.\,e. the number of items in the tuple $\bar v$ corresponding to the vertices of $G_i$). It's clear that each Boolean function computable in $G_i$ must have no more than $p_i$ gates and no more than $s_i + u_i$ inputs and outputs in total. Hence the following inequality holds:
$$A_4 \le \prod_{i=1}^t Z(p_i, s_i + u_i).$$

By multiplying the bounds for $A_i$ we obtain
$$N_k^T(n, m, L) \le \frac{\theta^{L+1}}{\theta - 1} \cdot L^{n+m} \cdot 3^{\frac{\delta L}{\log k}} \cdot \prod_{i=1}^t Z(p_i, s_i + u_i).$$

Taking the logarithm and omitting the negative addend be obtain
\begin{align}\label{EqMainEstimationLemLowerBoundCount}
	\log N_k^T(n, m, L) &\le  \underbrace{(L + 1) \log \theta}_{O(L)} + (n + m) \log L + \nonumber\\ &+ \underbrace{\frac{\delta L}{\log k} \log 3}_{o(L)} + \sum_{i=1}^t \log Z(p_i, s_i + u_i).
\end{align}

Let's bound the sum $\sum_{i=1}^t \log Z(p_i, s_i + u_i)$ in the right side of \eqref{EqMainEstimationLemLowerBoundCount}. We claim that we can apply lemma \ref{LemL5}.

We obtain the following conditions for tuples $\bar{p} = \{p_i\}_{i=1}^t$, $\bar{s} = \{s_i\}_{i=1}^t$ by using \eqref{K_partition}: $\bar{p} \in K(r, L)$, $\bar{s} \in K\left(qkr, \frac{\delta L}{\log k}\right)$. For tuple $\bar{u} = \{u_i\}_{i=1}^t$ it's obvious that $u_i \ge 0$, $\sum_{i=1}^t u_i \le n + m$. Finally we have $q \ge 1$, $\delta \ge 0$, $\frac{1}{1 - \lambda} > 0$, $k \to \infty$ and $r = (k \log k)^{\frac{1}{1 - \lambda}}$. Thus all conditions of lemma \ref{LemL5} are satisfied. Hence
\begin{align*}
    \sum_{i=1}^t \log Z(p_i, s_i + u_i) &\le \frac{L \log k}{1 - \lambda}\Biggl(1+O\biggl(\frac{\log\log k}{\log k}\biggr)\Biggr) +\\
    &+(n + m) \log (n + m) + O(n + m).
\end{align*}
Combining the bounds for addends in the right side of \eqref{EqMainEstimationLemLowerBoundCount}, we obtain
\begin{align*}
    \log N_k^T(n, m, L) &\le \frac{L \log k}{1 - \lambda}\Biggl(1+O\biggl(\frac{\log\log k}{\log k}\biggr)\Biggr) +\\
    &+ (n + m) \bigl(\log L + \log (n + m)\bigr) + O(n + m).
\end{align*}
\end{proof}

In this paper we are interested in the case when $n + m$ is small compared to $L$. In this case we can simplify the inequality in lemma \ref{LemLowerBoundCount}.
\begin{corollary}
\label{SledLowerBoundCount}
    Under the conditions of lemma \ref{LemLowerBoundCount}, if $k \to \infty$ and $n + m \le L / \log L$, then
\begin{equation*}
	\log N_k^T(n, m, L) \le \frac{L \log k}{1 - \lambda}\Biggl(1+O\biggl(\frac{\log\log k}{\log k}\biggr)\Biggr).
\end{equation*}
\end{corollary}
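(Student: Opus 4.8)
The plan is to read off Corollary~\ref{SledLowerBoundCount} from Lemma~\ref{LemLowerBoundCount} by checking that the extra hypothesis $n+m\le L/\log L$ forces the ``boundary'' contribution
\[
(n+m)\bigl(\log L+\log(n+m)\bigr)+O(n+m)
\]
to be of order $O(L)$, and then by absorbing a term of order $O(L)$ into the error term $\frac{L\log k}{1-\lambda}\,O\!\bigl(\tfrac{\log\log k}{\log k}\bigr)$ that is already present in the estimate of Lemma~\ref{LemLowerBoundCount}. Since $\lambda$ is fixed, $1-\lambda$ is a positive constant, so nothing here is delicate; the whole argument is asymptotic bookkeeping, and the only thing to watch is that the hidden constants stay independent of $n,m,L$.

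First I would bound the boundary terms. We may assume $L$ is large enough that $\log L\ge 1$; then $n+m\le L/\log L\le L$, so $\log(n+m)\le\log L$ and, using monotonicity of $x\mapsto x\log x$ on $[1,\infty)$ together with $\log(L/\log L)\le\log L$,
\[
(n+m)\bigl(\log L+\log(n+m)\bigr)\le (n+m)\log L+(n+m)\log(n+m)\le L+L=2L .
\]
Combined with $O(n+m)=O(L/\log L)=O(L)$, the boundary contribution is at most $CL$ for an absolute constant $C$. If instead $L$ stays bounded as $k\to\infty$, the boundary contribution is merely $O(1)$ and the claim is immediate from Lemma~\ref{LemLowerBoundCount}, so this case needs no separate treatment.

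It remains to absorb $CL$ into the error term. For $k\ge 4$ we have $\log\log k\ge 1$, hence $\frac1{\log k}\le\frac{\log\log k}{\log k}$, so
\[
CL=\frac{L\log k}{1-\lambda}\cdot\frac{C(1-\lambda)}{\log k}=\frac{L\log k}{1-\lambda}\,O\!\left(\frac{\log\log k}{\log k}\right),
\]
and this merges with the error term already appearing in Lemma~\ref{LemLowerBoundCount}. Putting the three pieces together gives $\log N_k^T(n,m,L)\le\frac{L\log k}{1-\lambda}\bigl(1+O(\tfrac{\log\log k}{\log k})\bigr)$, which is exactly the assertion of the corollary. I do not expect any genuine obstacle here: all the content is in Lemma~\ref{LemLowerBoundCount}, and the main (purely presentational) point is to state the chain of $O$-simplifications cleanly and consistently, keeping the constants independent of $n$, $m$ and $L$.
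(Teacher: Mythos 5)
Your proposal is correct and matches the paper's (implicit) argument: the paper states this corollary without proof as an immediate simplification of Lemma \ref{LemLowerBoundCount}, and your bookkeeping --- bounding the boundary term by $O(L)$ via $n+m\le L/\log L$ and absorbing $O(L)$ into $\frac{L\log k}{1-\lambda}\,O\bigl(\frac{\log\log k}{\log k}\bigr)$ using $\frac{1}{\log k}=O\bigl(\frac{\log\log k}{\log k}\bigr)$ --- is exactly the intended reasoning.
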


\begin{lemma}\label{LemLowerBoundMain}
    Let $T \in \mathcal{G}(\lambda, q, \theta)$, $k \to \infty$, $n \to \infty$. Then
\[
	L(M_k^T, n) \ge \frac{2^n (1 - \lambda)}{ \log k }\Biggl(1 + O\biggl(\frac{\log\log k}{\log k}\biggr)\Biggr).
\]
\end{lemma}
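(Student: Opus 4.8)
The plan is to combine the counting bound of Corollary~\ref{SledLowerBoundCount} with a simple counting argument: if short circuits could compute every Boolean function of $n$ variables, then the number of such functions would be at most $N_k^T(n,1,L)$, so $2^{2^n}\le N_k^T(n,1,L)$, and this forces $L$ to be large. First I would set $L = L(M_k^T,n)$, so that by definition every function in $B_n$ is computed by some $k$-layer circuit in $T$ of size at most $L$; hence $N_k^T(n,1,L)\ge 2^{2^n}$, i.e. $\log N_k^T(n,1,L)\ge 2^n$.

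Next I would like to invoke Corollary~\ref{SledLowerBoundCount} with $m=1$, which needs the side condition $n+1\le L/\log L$. By Lemma~\ref{LemN3} we already know $L\gtrsim 2^n/n$, so $L$ grows much faster than any polynomial in $n$ and the condition $n+1\le L/\log L$ holds for all large $n$; I would spell this out. The corollary then gives
\[
2^n \;\le\; \log N_k^T(n,1,L)\;\le\;\frac{L\log k}{1-\lambda}\Biggl(1+O\biggl(\frac{\log\log k}{\log k}\biggr)\Biggr),
\]
and rearranging yields
\[
L\;\ge\;\frac{2^n(1-\lambda)}{\log k}\Biggl(1+O\biggl(\frac{\log\log k}{\log k}\biggr)\Biggr)^{-1}
\;=\;\frac{2^n(1-\lambda)}{\log k}\Biggl(1+O\biggl(\frac{\log\log k}{\log k}\biggr)\Biggr),
\]
which is exactly the claimed bound.

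The only genuine subtlety is the interplay of the two limits $k\to\infty$ and $n\to\infty$ hidden in the $O(\cdot)$ terms: the error term in Corollary~\ref{SledLowerBoundCount} is uniform in $n,m$ once $n+m\le L/\log L$, so one should check that the constants in the $O$-notation do not secretly depend on $n$ through $L$. Since the bound in Lemma~\ref{LemL5} (and hence in Lemma~\ref{LemLowerBoundCount}) carries its $O$-constants independently of the tuples involved, this is fine, but I would state it carefully. A second minor point is that the corollary's hypothesis uses $n+m$ whereas here $m=1$; substituting $m=1$ is harmless. Apart from these bookkeeping issues the argument is a one-line consequence of the counting lemma, so I expect no real obstacle — the work was already done in Lemma~\ref{LemLowerBoundCount}.
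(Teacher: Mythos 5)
Your proposal is correct and follows essentially the same route as the paper's proof: set $L = L(M_k^T,n)$, observe $\log N_k^T(n,1,L)\ge 2^n$, use Lemma~\ref{LemN3} to verify $n+1\le L/\log L$ so that Corollary~\ref{SledLowerBoundCount} applies with $m=1$, and rearrange. The only cosmetic difference is that the paper writes the counting step as the identity $N_k^T(n,1,L)=2^{2^n}$ rather than an inequality, which changes nothing.
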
 

\begin{proof}
It follows from lemma \ref{LemN3} that for great enough values of $n$ and any $k$ the inequality below holds:
\begin{equation}\label{EqTrivialBoundLemLowerBoundMain}
	L(M_k^T, n) \ge \frac{1}{2} \cdot \frac{2^n}{n}.
\end{equation}

Let $k \to \infty$, $n \to \infty$. We denote $L = L(M_k^T, n)$ for brevity. Using the term $N_k^T(n, m, L)$ defined in lemma \ref{LemLowerBoundCount}, we obtain the identity
$$N_k^T(n, 1, L) = 2^{2^n}.$$

It follows from \eqref{EqTrivialBoundLemLowerBoundMain} that $n = O(\log L) = o(L / \log L)$. By applying corollary \ref{SledLowerBoundCount} of lemma \ref{LemLowerBoundCount} we obtain
\[
    2^n \le \frac{L \log k}{1 - \lambda} \Biggl(1 + O\biggl(\frac{\log\log k}{\log k}\biggr)\Biggr).
\]
This implies to the claim of the lemma.
\end{proof}

\begin{theorem}\label{ThLowerBound}
	Let $0 < \lambda < 1$, $T \in \mathcal{G}(\lambda, q, \theta)$. Then
	\begin{equation*}
	L(M_k^T, n) \gtrsim \max \left(\frac{2^n}{n}, \frac{2^n (1 - \lambda)}{\log k} \right)\quad \mbox{as}\quad k\to\infty,\ n\to\infty.
	\end{equation*}
\end{theorem}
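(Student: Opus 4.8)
The plan is to combine the two lower bounds that have essentially already been established in the preceding lemmas, so that Theorem \ref{ThLowerBound} becomes a short synthesis argument rather than a new computation. Lemma \ref{LemN3} already gives $L(M_k^T, n) \gtrsim 2^n/n$ as $n \to \infty$, for any support whatsoever, so that half of the $\max$ is free. Lemma \ref{LemLowerBoundMain} gives $L(M_k^T, n) \ge \frac{2^n(1-\lambda)}{\log k}\bigl(1 + O(\frac{\log\log k}{\log k})\bigr)$ as $k \to \infty$, $n \to \infty$, for $T \in \mathcal{G}(\lambda, q, \theta)$. Since $1 + O(\frac{\log\log k}{\log k}) \to 1$ as $k \to \infty$, the second bound says precisely $L(M_k^T, n) \gtrsim \frac{2^n(1-\lambda)}{\log k}$ in the $\varlimsup$ sense. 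The maximum of two quantities each of which is $\gtrsim$-bounded below by a target is itself $\gtrsim$-bounded below by the maximum of the targets, so the theorem follows.

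The one point that needs a little care is the bookkeeping with the $\lesssim/\gtrsim$ notation: $f \gtrsim g$ means $\varlimsup f/g \le 1$, i.e. $\varliminf g/f \ge 1$, so I would phrase things in terms of $\varliminf$ of ratios. First I would fix the target function $R(n,k) = \max\left(\frac{2^n}{n}, \frac{2^n(1-\lambda)}{\log k}\right)$ and note that it suffices to show $\varliminf \frac{L(M_k^T,n)}{R(n,k)} \ge 1$ as $k,n \to \infty$. For any fixed point in the limiting process, one of the two terms in the maximum realizes $R(n,k)$; if it is the first term, then $\frac{L}{R} = \frac{L}{2^n/n} \ge 1 - o(1)$ by Lemma \ref{LemN3}; if it is the second, then $\frac{L}{R} = \frac{L}{2^n(1-\lambda)/\log k} \ge 1 + O(\frac{\log\log k}{\log k}) = 1 - o(1)$ by Lemma \ref{LemLowerBoundMain}. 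In either case $\frac{L}{R} \ge 1 - o(1)$, so the $\varliminf$ is at least $1$.

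A slightly cleaner packaging, which avoids case analysis, is to use the elementary fact that if $a_1 \ge b_1(1-\varepsilon_1)$ and $a_2 \ge b_2(1-\varepsilon_2)$ with $b_1,b_2 \ge 0$, then $\max(a_1,a_2) \ge \max(b_1,b_2) \cdot (1 - \max(\varepsilon_1,\varepsilon_2))$ — apply this with $a_1 = a_2 = L(M_k^T,n)$ (the same quantity bounded two different ways), $b_1 = 2^n/n$, $b_2 = 2^n(1-\lambda)/\log k$, $\varepsilon_1 = o(1)$ from Lemma \ref{LemN3}, and $\varepsilon_2 = O(\frac{\log\log k}{\log k})$ from Lemma \ref{LemLowerBoundMain}. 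Then $L(M_k^T,n) = \max(a_1,a_2) \ge R(n,k)(1-o(1))$ directly, which is exactly the claimed $\gtrsim$.

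There is no real obstacle here: all the substantive work — the $r^\lambda$-partition (Lemma \ref{LemSepDiv}), the counting bound on functions realizable by small-complexity $k$-layer circuits (Lemma \ref{LemLowerBoundCount} and its Corollary \ref{SledLowerBoundCount}), and the comparison with the $2^{2^n}$ functions on $n$ variables (Lemma \ref{LemLowerBoundMain}) — has been done upstream. The only thing to watch is that the two limits are taken jointly ($k \to \infty$ and $n \to \infty$), matching the hypotheses of both Lemma \ref{LemN3} and Lemma \ref{LemLowerBoundMain}, and that the error term in Lemma \ref{LemLowerBoundMain} genuinely vanishes, which it does since $\frac{\log\log k}{\log k} \to 0$. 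So I expect the proof to be three or four lines invoking the two lemmas and the triviality about maxima.
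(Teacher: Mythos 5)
Your proposal is correct and follows exactly the paper's route: the paper proves Theorem \ref{ThLowerBound} by simply citing Lemma \ref{LemN3} (for the $2^n/n$ branch) and Lemma \ref{LemLowerBoundMain} (for the $2^n(1-\lambda)/\log k$ branch). Your extra care with the $\varliminf$ bookkeeping and the maximum of two lower bounds is a sound elaboration of what the paper leaves implicit.
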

\begin{proof}
    It follows from lemmas \ref{LemN3} and \ref{LemLowerBoundMain}.
\end{proof}

Hereby we obtained the lower bound for $\lambda$-separable supports, i.\,e. for supports with separability function like $p^{\lambda}$. Using theorem \ref{ThLowerBound} one can obtain lower bound for supports with separability function like $\log p$, $\sqrt p \log \log p$, etc.

\begin{corollary}\label{CorollaryLowerBoundLim}
	Let $0 \le \lambda_0 < 1$, $f(p) = O(p^{\lambda})$ for all $\lambda > \lambda_0$. Let $T \in \mathcal{G}(q, \theta)$ be a $f(p)$-separable support. Then
	\begin{equation}\label{EqResCorollaryLowerBoundLim}
	L(M_k^T, n) \gtrsim \max \left(\frac{2^n}{n}, \frac{2^n (1 - \lambda_0)}{\log k} \right)\quad \mbox{as}\quad k\to\infty,\ n\to\infty.
	\end{equation}
\end{corollary}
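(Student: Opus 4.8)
The plan is to deduce Corollary \ref{CorollaryLowerBoundLim} from Theorem \ref{ThLowerBound} by a limiting argument in $\lambda$. The point is that $f(p)$-separability with $f(p) = O(p^\lambda)$ for \emph{every} $\lambda > \lambda_0$ means, in particular, that $T$ is $p^\lambda$-separable for each such $\lambda$; hence $T \in \mathcal{G}(\lambda, q, \theta)$ for every $\lambda \in (\lambda_0, 1)$, and Theorem \ref{ThLowerBound} applies with that $\lambda$. So for each fixed $\lambda$ with $\lambda_0 < \lambda < 1$ we get
\[
	L(M_k^T, n) \gtrsim \max\left(\frac{2^n}{n}, \frac{2^n(1-\lambda)}{\log k}\right)\quad\mbox{as}\quad k\to\infty,\ n\to\infty.
\]
The remaining task is to upgrade the family of bounds indexed by $\lambda$ to the single bound with $\lambda_0$ in place of $\lambda$.

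First I would unwind the meaning of $\gtrsim$ here: the claim \eqref{EqResCorollaryLowerBoundLim} with the double limit $k\to\infty$, $n\to\infty$ means that for every $\epsilon > 0$ there is a threshold such that once $n$ and $k$ are large enough,
\[
	L(M_k^T, n) \ge (1-\epsilon)\max\left(\frac{2^n}{n}, \frac{2^n(1-\lambda_0)}{\log k}\right).
\]
So fix $\epsilon > 0$. The term $\frac{2^n}{n}$ does not involve $\lambda$ at all and is handled directly by Lemma \ref{LemN3}, so it suffices to dominate the second term. Choose $\lambda \in (\lambda_0, 1)$ close enough to $\lambda_0$ that $1 - \lambda \ge (1 - \tfrac{\epsilon}{2})(1-\lambda_0)$; this is possible since $1 - \lambda \to 1 - \lambda_0$ as $\lambda \downarrow \lambda_0$. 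For this fixed $\lambda$, Theorem \ref{ThLowerBound} gives a threshold beyond which $L(M_k^T, n) \ge (1 - \tfrac{\epsilon}{2})\frac{2^n(1-\lambda)}{\log k} \ge (1-\tfrac{\epsilon}{2})^2 \frac{2^n(1-\lambda_0)}{\log k} \ge (1-\epsilon)\frac{2^n(1-\lambda_0)}{\log k}$. Combining with the $\frac{2^n}{n}$ bound from Lemma \ref{LemN3} yields the desired inequality for all sufficiently large $n$ and $k$, which is exactly \eqref{EqResCorollaryLowerBoundLim}.

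The only genuinely delicate point — and the one I would state carefully rather than wave at — is the order of quantifiers: $\lambda$ must be chosen depending on $\epsilon$ \emph{before} invoking Theorem \ref{ThLowerBound}, and then the thresholds on $n$ and $k$ depend on that $\lambda$ (equivalently, on $\epsilon$), which is exactly what the definition of $\gtrsim$ permits. There is no uniformity needed across $\lambda$, so no compactness or diagonal argument is required; the constants $q$ and $\theta$ play no role since the bound in Theorem \ref{ThLowerBound} is independent of them. One should also note the boundary case $\lambda_0 = 0$: there $f(p) = O(p^\lambda)$ for all $\lambda > 0$ covers separability functions like $\log p$ or $\log^c p$, and the argument goes through verbatim with $\lambda$ chosen small, giving the coefficient $1 - \lambda_0 = 1$, i.e. $L(M_k^T, n) \gtrsim \max(2^n/n,\, 2^n/\log k)$.
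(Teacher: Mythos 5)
Your proof is correct and follows essentially the same route as the paper: apply Theorem \ref{ThLowerBound} for each $\lambda\in(\lambda_0,1)$ and let $\lambda\downarrow\lambda_0$, using $\frac{1-\lambda}{1-\lambda_0}\to 1$. The paper packages this as $\liminf \frac{L(M_k^T,n)}{g(\lambda_0,k,n)}\ge\sup_{\lambda}\frac{1-\lambda}{1-\lambda_0}=1$, which is just the $\liminf$ formulation of your $\epsilon$-threshold argument.
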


\begin{proof}
Let $\lambda > \lambda_0$. It's clear that $T$ is $p^{\lambda}$-separable. By theorem \ref{ThLowerBound}, $L(M_k^T, n) \gtrsim \max \left(\frac{2^n}{n}, \frac{2^n (1 - \lambda)}{\log k} \right)$ as $k \to \infty$, $n \to \infty$.

Denote $g(\lambda,k,n)=\max \left(\frac{2^n}{n}, \frac{2^n (1 - \lambda)}{\log k} \right)$. We have
$$\liminf_{\substack{k\to\infty \\ n\to\infty}} \frac{L(M^T_k,n)}{g(\lambda,k,n)}\ge 1\quad\mbox{ as } \quad\lambda_0<\lambda<1.$$
It's easy to see that $\frac{g(\lambda,k,n)}{g(\lambda_0,k,n)}\ge \frac{1-\lambda}{1-\lambda_0}$, therefore
$$A:=\liminf_{\substack{k\to\infty \\ n\to\infty}} \frac{L(M^T_k,n)}{g(\lambda_0,k,n)} \ge \liminf_{\substack{k\to\infty \\ n\to\infty}} \frac{1-\lambda}{1-\lambda_0}\cdot\frac{L(M^T_k,n)}{g(\lambda,k,n)}\ge \frac{1-\lambda}{1-\lambda_0}$$
for all $\lambda_0<\lambda<1$.

Hence $A\ge \sup\limits_{1>\lambda>\lambda_0}\frac{1-\lambda}{1-\lambda_0}=1$. The latter implies to \eqref{EqResCorollaryLowerBoundLim}.
\end{proof}

\section{Lower bound for $d$-dimensional circuits}\label{SecDDimensional}

In this section we prove lower bound for $d$-dimensional circuits and asymptotics for $d$-dimensional rectangular circuits. In substance the lower bound for $d$-dimensional circuits is a corollary for the lower bound for $\lambda$-separable supports, since we prove that all $d$-dimensional supports belong to classes $\mathcal{G}(\lambda, q, \theta)$.

\subsection{Properties of $d$-dimensional graphs}

Indeed we have to prove that $d$-dimensional graphs have three properties: bounded vertex degree, exponentially bounded number of non-isomorphic subgraphs, and $\lambda$-separability.

\begin{lemma}\label{LemDegLimited}
Let $T$ be a $d$-dimensional support (accordingly let $\mathcal{G}$ be a class of $d$-dimensional graphs) with a parameter $c_e$. Then vertex degree of $T$ (accordingly of any graph is $\mathcal{G}$) is bounded by $(2c_e +1)^d$.
\end{lemma}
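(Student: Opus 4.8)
The plan is to use a standard volume-packing argument in $\mathbb{R}^d$. Fix an embedding of $T$ into $d$-dimensional Euclidean space witnessing the definition, normalised (per the remark) so that $c_v = 1$: distinct vertices are sent to points at pairwise distance at least $1$, and every edge is sent to a segment of length at most $c_e$.

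First I would fix a vertex $v$ and let $u_1, \dots, u_\ell$ be its neighbours. All the points $v, u_1, \dots, u_\ell$ lie in the closed ball of radius $c_e$ centred at (the image of) $v$, since each edge $vu_i$ has length at most $c_e$. Around each of these $\ell + 1$ points I would draw an open ball of radius $\tfrac12$. Because the centres are pairwise at distance at least $1$, these balls are pairwise disjoint; and each of them is contained in the ball of radius $c_e + \tfrac12$ centred at $v$. Comparing $d$-dimensional volumes gives $(\ell + 1)\bigl(\tfrac12\bigr)^d \le \bigl(c_e + \tfrac12\bigr)^d$, hence $\ell + 1 \le (2c_e + 1)^d$, so in particular $\deg(v) = \ell \le (2c_e + 1)^d$. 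For a monotone class $\mathcal{G}$ of $d$-dimensional graphs with common parameter $c_e$, the same bound holds for every $G \in \mathcal{G}$ with the same constant, since each such graph admits an embedding of the required kind.

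The argument has essentially no obstacle; the only points needing care are that the embedding genuinely separates distinct vertices (guaranteed by $c_v = 1 > 0$) and that the packing radius $\tfrac12$ is at most half the minimal separation, so the small balls are truly disjoint. One could even sharpen the constant, since $v$ itself is among the packed points, giving $\deg(v) \le (2c_e+1)^d - 1$; but the stated bound $(2c_e+1)^d$ is all that is used subsequently, so I would leave it in that form.
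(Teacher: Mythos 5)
Your argument is correct and is essentially the same volume-packing proof as in the paper: the neighbourhood of a vertex lies in a ball of radius $c_e$, the disjoint balls of radius $\tfrac12$ around the packed points fit inside a ball of radius $c_e+\tfrac12$, and the volume ratio gives $(2c_e+1)^d$. Your remark that the bound could be sharpened by $1$ (since $v$ itself is packed) is a minor refinement the paper does not bother with either.
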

\begin{proof}
When placing arbitrary $d$-dimensional graph into $d$-dimensional space the neighborhood of any vertex is placed into a ball of radii $c_e$. Since $d$-dimensional balls with radii $0.5$ and centers in graph vertices do not intersect and lie inside a ball with radii $c_e + 0.5$, the number of such balls cannot exceed the ratio of volumes of $d$-dimensional balls with radii $c_e + 0.5$ and $0.5$ respectively. This ratio is equal to $(2c_e +1)^d$.
\end{proof}

\begin{lemma}\label{LemSubgraphExpLimited}
    Let $T$ be a $d$-dimensional graph. Then the number of non-labeled subgraphs of $T$ with $n$ vertices does not exceed $\theta^n$, where $\theta$ is a constant.
\end{lemma}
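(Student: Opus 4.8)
The plan is to exploit the metric rigidity of $d$-dimensional embeddings: a subgraph on $n$ vertices, together with its straight-line embedding, must occupy a region of $d$-space whose "shape" can be described by a bounded amount of information per vertex, so the number of non-isomorphic such subgraphs grows only exponentially. Concretely, first I would fix an embedding of $T$ into $\mathbb{R}^d$ witnessing $d$-dimensionality, with $c_v = 1$ (minimum pairwise distance $1$) and edge lengths at most $c_e$. Let $G$ be a connected subgraph of $T$ with $n$ vertices; it suffices to bound connected subgraphs, since an arbitrary subgraph on $n$ vertices is a disjoint union of connected ones and the number of ways to assemble these will contribute only another exponential factor (this point should be handled at the end, or one reduces to the connected case by a standard argument — I expect connectedness of subgraphs is the natural object here since in the application $r^\lambda$-partition fragments can be taken connected). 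Pick a root vertex $v_0$ of $G$ and translate so that $v_0$ sits at the origin.

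Next I would observe that, because $G$ is connected with $n$ vertices and each edge has length $\le c_e$, every vertex of $G$ lies within Euclidean distance $c_e n$ of the origin, hence inside the ball $B(0, c_e n)$. Lay down an auxiliary grid of cubes of side $1/(2\sqrt{d})$ (so that each cube has diameter $1/2 < 1 = c_v$, forcing at most one vertex of $G$ per cube). The number of such cells meeting $B(0, c_e n)$ is at most $C^d n^d$ for a suitable absolute constant $C$ depending on $c_e$ and $d$. Now the key combinatorial step: the isomorphism type of $G$ as an abstract graph is determined (up to the finitely many choices just made, plus the choice of which vertex is the root) by the \emph{set of occupied cells}, since two vertices are adjacent in $T$ iff they are joined by an edge of the fixed embedding, and — crucially — adjacency is a function of the geometric positions. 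Here I need to be slightly careful: adjacency in a support is not purely a function of distance (supports may have arbitrary edges of length $\le c_e$), so the argument must instead encode, for each occupied cell, not only its position but also which of the $O(1)$ nearby cells it is joined to by an edge. Since each vertex has degree at most $q = (2c_e+1)^d$ by Lemma \ref{LemDegLimited}, and each of its neighbors lies in one of at most $(2c_e\sqrt d + 2)^d = O(1)$ nearby cells, the local adjacency pattern at each vertex is one of a bounded number $2^{O(1)}$ of possibilities.

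So I would encode $G$ by: (i) choosing, among the $\le C^d n^d$ candidate cells, the $n$ that are occupied; and (ii) recording at each occupied cell its bounded-size local adjacency pattern. The count in (i) is $\binom{C^d n^d}{n} \le (C^d n^d)^n$, which is $\emph{not}$ of the form $\theta^n$ because of the $n^d$ inside — this is the main obstacle, and the fix is to encode the occupied cells relative to each other rather than by absolute position. Instead, build a spanning tree of $G$ and do a traversal (say BFS/DFS) from the root; at each step we move from an already-placed vertex to a new neighbor, and that neighbor occupies one of at most $(2c_e\sqrt d + 2)^d = O(1)$ cells relative to the current one. Recording the traversal shape (a tree on $n$ nodes, encoded in $4^n$ ways, or less) together with, for each of the $n-1$ tree edges, which of the $O(1)$ relative cells the child occupies, and finally the $O(1)$-bit local adjacency pattern at each vertex, yields a description of total length $O(n)$ bits, hence at most $\theta^n$ non-isomorphic subgraphs for a constant $\theta$ depending only on $c_e$ and $d$. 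The one remaining subtlety — reconstructing which abstract graph this describes — is fine: the traversal plus relative-cell data reconstructs the geometric positions (up to rigid motion), and the local adjacency patterns then recover all edges, so the isomorphism type is determined. Finally, dropping connectedness: a general $n$-vertex subgraph is a disjoint union of connected pieces of sizes $n_1 + \dots + n_j = n$, and the number of choices of the multiset of isomorphism types is at most $\prod \theta^{n_i} = \theta^n$ times the number of compositions of $n$, which is $2^{n-1}$; absorbing this into a slightly larger $\theta$ completes the proof.
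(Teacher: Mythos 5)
Your argument is correct in substance, but it is worth noting that the paper does not actually prove Lemma \ref{LemSubgraphExpLimited} at all: it simply cites the remark to Lemma 2 of Korshunov's paper \cite{Korshunov1}, where an analogous count for graphs embedded in space with separated vertices and bounded edge lengths is carried out. Your self-contained proof follows what is essentially the standard route for such bounds, and it correctly identifies and repairs the two traps: absolute cell positions give only $n^{O(n)}$ rather than $\theta^n$ (fixed by encoding positions relatively along a spanning-tree traversal), and adjacency in a support is not determined by geometry alone (fixed by recording an $O(1)$-bit local adjacency pattern per vertex). The reduction from arbitrary to connected subgraphs via multisets of components and compositions of $n$ is also fine. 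Two small tidy-ups: the paper explicitly allows supports with multiple edges and self-loops, so the local adjacency pattern should record an edge multiplicity in $\{0,\dots,q\}$ (with $q=(2c_e+1)^d$ from Lemma \ref{LemDegLimited}) for each of the $O(1)$ nearby cells plus a loop count at the vertex itself --- still $O(1)$ bits per vertex, so $\theta$ only grows by a constant factor; and the opening paragraph's bound ``all vertices lie in $B(0,c_e n)$'' together with the $\binom{C^d n^d}{n}$ count is a dead end you yourself discard, so it could simply be deleted. What your version buys over the paper's is a proof readable without access to \cite{Korshunov1}; what the citation buys is brevity and consistency with the fact that Korshunov's setting (volumetric circuits with the same two metric constraints) matches the paper's definition of $d$-dimensional graphs essentially verbatim.
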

\begin{proof}
    Immediately follows from the remark to lemma 2 in \cite{Korshunov1}.
\end{proof}

We apply the results of \cite{MultidimensionalSeparator} to prove $\lambda$-separability of $d$-dimensional graphs.

\begin{definition}[\cite{MultidimensionalSeparator}, definition 2.3]
    Let $\alpha \ge 1$ be given, and let $B = \{B_1, B_2, \dots, B_p\}$ be a set of closed $d$-dimensional balls with non-overlapping interiors. The \emph{$\alpha$-overlap graph} for $B$ is the undirected graph with vertices  $V = \{1, 2, \dots, p\}$ and edges 
\[
	E = \{\{i, j\} \colon B_i \cap (\alpha \cdot B_j) \ne \varnothing \text{ and } (\alpha \cdot B_i) \cap B_j \ne \varnothing \},
\]
where $\alpha \cdot B_j$ is a ball centered as $B_j$ and having $\alpha$ times greater radii.
\end{definition}

The following lemma shows the connection between $d$-dimensional graphs and $\alpha$-overlap graphs.
\begin{lemma}\label{LemDDimAlpha}
    Let $\mathcal{G}$ be a class of $d$-dimensional graphs with a parameter $c_e$. Then each graph in $\mathcal{G}$ can be supplemented by some number of edges (maybe $0$) resulting to a $2c_e$-overlap graph in a $d$-dimensional space.
\end{lemma}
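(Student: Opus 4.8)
The plan is to construct, for each $d$-dimensional graph $G \in \mathcal{G}$ (with parameter $c_e$), a family of closed $d$-dimensional balls whose $2c_e$-overlap graph contains $G$ as a spanning subgraph. By the definition of a $d$-dimensional graph we have an embedding of $G$ into $\mathbb{R}^d$ with pairwise vertex distances at least $c_v = 1$ and edge lengths at most $c_e$. For each vertex $v$ of $G$, let $x_v$ be its image point, and set $B_v$ to be the closed ball of radius $r = \tfrac12$ centred at $x_v$. Since the vertex images are at pairwise distance at least $1 = 2r$, these balls have pairwise non-overlapping interiors, so $B = \{B_v : v \in V(G)\}$ is a legitimate input to the definition of a $2c_e$-overlap graph.

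Next I would check that every edge of $G$ appears in the $2c_e$-overlap graph of $B$. Suppose $\{u, v\} \in E(G)$. Then $\|x_u - x_v\| \le c_e$. The ball $2c_e \cdot B_v$ has centre $x_v$ and radius $2c_e \cdot \tfrac12 = c_e$, hence it contains $x_u \in B_u$, so $B_u \cap (2c_e \cdot B_v) \ne \varnothing$; symmetrically $(2c_e \cdot B_u) \cap B_v \ne \varnothing$. Thus $\{u, v\}$ is an edge of the $2c_e$-overlap graph. Identifying the vertex set $V = \{1, \dots, p\}$ of the overlap graph with $V(G)$ via the embedding, we see that the overlap graph has the same vertex set as $G$ and contains all edges of $G$; it may contain additional edges, precisely the ones we "supplement" $G$ with to obtain the overlap graph. (One should note $\alpha = 2c_e \ge 1$ is needed for the definition to apply; if $c_e < \tfrac12$ one may harmlessly enlarge $c_e$, or equivalently use a correspondingly smaller ball radius $r = c_e$, so that $\alpha = 1$ — but in any case finite supports can be handled by taking $c_e$ large, and for the monotone classes of interest $c_e$ is fixed and may be assumed $\ge \tfrac12$.)

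There is essentially no deep obstacle here; the lemma is a direct unwinding of the two definitions. The only point requiring a moment of care is the choice of ball radius: it must be small enough (relative to $c_v = 1$) that interiors are disjoint, yet the scaling factor $\alpha = 2c_e$ must be exactly calibrated so that $\alpha r = c_e$ covers the maximal edge length. Choosing $r = \tfrac12$ achieves both simultaneously given $c_v = 1$, which is why the normalisation $c_v = 1$ (recorded in the remarks) is convenient. For a whole monotone class $\mathcal{G}$ of $d$-dimensional graphs with common parameter $c_e$, the construction is uniform — the same radius $r = \tfrac12$ and the same overlap factor $2c_e$ work for every graph in the class — so the statement holds classwise with no dependence on the particular graph.

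Finally, I would remark on why this lemma is useful: the separator theorem for $\alpha$-overlap graphs from \cite{MultidimensionalSeparator} gives vertex separators of size $O(p^{(d-1)/d})$ for such graphs, and since adding edges only makes separation harder, the same bound transfers to subgraphs of $G$ (using that the class of $d$-dimensional graphs with parameter $c_e$ is monotone, so every subgraph is again in the class and the construction applies to it). Combined with Lemma \ref{LemDegLimited} (bounded degree) this yields edge $\lambda$-separability with $\lambda = \tfrac{d-1}{d}$ via Lemma \ref{AssSepVtoE}, and together with Lemma \ref{LemSubgraphExpLimited} it places every $d$-dimensional support in a class $\mathcal{G}(\lambda, q, \theta)$ — which is exactly what is needed to invoke Theorem \ref{ThLowerBound}.
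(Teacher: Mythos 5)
Your construction (balls of radius $\tfrac12$ at the vertex images, non-overlapping interiors from $c_v=1$, and the observation that an edge of length at most $c_e$ forces $B_u\cap(2c_e\cdot B_v)\ne\varnothing$ and symmetrically) is exactly the paper's proof, and it is correct. The extra remarks about $\alpha\ge 1$ and the downstream use of the lemma are fine but not needed.
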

\begin{proof}
Let $G \in \mathcal{G}$. Consider the placement of $G$ into a $d$-dimensional space, and let $G'$ be the $2c_e$-overlap graph for the balls of radii $0.5$ and centers in the vertices of $G$. Since the distance between centers of any two balls is not less than $1$, interiors of the balls do not intersect.

If there is an edge between two vertices in $G$, the distance between the centers of the corresponding balls does not exceed $c_e$. Therefore balls with the same centers and radii $0.5$ and $0.5 \cdot 2c_e= c_e$ respectively would intersect. Hence all edges of $G$ are also edges of $G'$.
\end{proof}

\begin{lemma}[\cite{MultidimensionalSeparator}, theorem 2.4]\label{LemDDimensionalSep}
Let $d \ge 1$, $\alpha \ge 1$ be constants. Then there exists a function
\[
	f(p) = O\left(\alpha \cdot p^{\frac{d-1}{d}} + c(\alpha, d)\right)
\]
such that each $\alpha$-overlap graph in a $d$-dimensional space is vertex $f(p)$-separable. The separator splits its parent graph into pieces with no more than $\frac{d+1}{d+2}$ of the initial number of vertices.
\end{lemma}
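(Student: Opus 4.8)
The statement is the classical geometric separator theorem for sphere packings and overlap graphs, and the plan is to reproduce the Miller--Teng--Thurston--Vavasis argument: the separator will be produced as the set of balls met by a suitably chosen $(d-1)$-sphere (or hyperplane) of $\mathbb{R}^d$. First I would lift the configuration by the stereographic projection $\Pi$ from $\mathbb{R}^d$ onto the unit sphere $S^d\subset\mathbb{R}^{d+1}$ (with north pole $N$), setting $q_i=\Pi(x_i)$ for the ball centres $x_1,\dots,x_p$. The feature of $\Pi$ that makes this work is that the $\Pi$-preimage of any hyperplane section of $S^d$ is exactly a $(d-1)$-sphere or a hyperplane of $\mathbb{R}^d$, so that to construct a spherical or planar separator it suffices to construct a good hyperplane section of $S^d$.

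Next I would choose a centrepoint $z\in\mathbb{R}^{d+1}$ of $\{q_1,\dots,q_p\}$ --- a point $z$ such that every closed halfspace containing $z$ holds at least $p/(d+2)$ of the $q_i$, whose existence is the centrepoint theorem in dimension $d+1$ --- and apply a conformal automorphism of $S^d$ (a rotation composed with the sphere map induced by a dilation of $\mathbb{R}^d$ about the origin) carrying $z$ to the centre of $S^d$. This step rescales the radii $\rho_i$ of the balls $B_i$, but any similarity of $\mathbb{R}^d$ leaves the $\alpha$-overlap relation invariant, so it suffices to separate the transformed configuration. Then I would take a hyperplane $H$ through the origin with uniformly random normal direction and set $C=H\cap S^d$. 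Because the centrepoint now lies at the centre of $S^d$, for every such $H$ each open side of $H$ contains at least $p/(d+2)$ of the $q_i$, hence at most $\frac{d+1}{d+2}p$ of them; pulling back through $\Pi$, the $(d-1)$-sphere (or hyperplane) $\Pi^{-1}(C)$ splits $\{x_1,\dots,x_p\}$ into two groups of at most $\frac{d+1}{d+2}p$ points each, which is the required balance.

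The remaining and hardest step is to bound the expected number of balls whose boundary $\Pi^{-1}(C)$ crosses. A ball $B_i$ is crossed only when the random great sphere $C$ lies inside an angular band around $q_i$ on $S^d$ whose width is $O\big(\rho_i/(1+\|x_i\|^2)\big)$ up to constants, an event of probability at most the ratio of that band's area to the area of $S^d$. Summing these probabilities and using that the interiors of the $B_i$ are pairwise disjoint while each $\alpha B_i$ meets only boundedly many of the others, one charges $q_i$ a spherical mass proportional to the area of $\Pi(B_i)$ on $S^d$; an isoperimetric (Cauchy--Crofton type) estimate on $S^d$ then bounds the total expectation by $O(\alpha p^{(d-1)/d}+c(\alpha,d))$, the additive term accounting for the few balls so large that their images occupy a constant fraction of $S^d$. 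Hence some $H$ attains this bound, and taking the separator to be the set of balls crossed by $\Pi^{-1}(C)$ (together with a bounded overlap-neighbourhood to absorb the $\alpha$-overlap edges straddling the sphere) yields a vertex $f(p)$-separator with $f(p)=O(\alpha p^{(d-1)/d}+c(\alpha,d))$ and the stated balance. I expect the main difficulty to be exactly this area-versus-probability computation, where the scale invariance introduced by the centering step and the bounded local overlap of the packing are precisely what make the summed ball masses collapse to the $p^{(d-1)/d}$ rate.
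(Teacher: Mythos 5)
This lemma is not proved in the paper at all: it is imported verbatim (up to the rephrasing noted in the accompanying remark) from the cited source \cite{MultidimensionalSeparator}, Theorem~2.4. Your sketch is a faithful outline of the argument in that reference --- stereographic lifting to $S^d$, centerpoint in $\mathbb{R}^{d+1}$, conformal recentering, a random great sphere for the balance $\frac{d+1}{d+2}p$, and the area/H\"older computation bounding the expected number of crossed balls by $O\left(\alpha p^{(d-1)/d} + c(\alpha,d)\right)$ --- so it matches the proof the paper relies on by citation.
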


Essentially lemma \ref{LemDDimensionalSep} claims $\frac{d-1}{d}$-separability of all $\alpha$-overlap graphs in a $d$-dimensional space.

\begin{remark}
In the source \cite{MultidimensionalSeparator} lemma \ref{LemDDimensionalSep} was stated in a slightly different way. Considering any $\alpha$-overlap graph in $d$-dimensional space, it was claimed that the graph has a separator of size bounded by $O\left(\alpha \cdot p^{\frac{d-1}{d}} + c(\alpha, d)\right)$. Since the separability function is common for all graphs in a monotone class, we modified the statement of the lemma in this paper to emphasize the independence of the separability function from individual graphs.
\end{remark}

\begin{corollary}\label{Sl1LemDDimensionalSep}
    Let $\mathcal{G}$ be a class of $d$-dimensional graphs. Then $\mathcal{G}$ is $p^{\frac{d-1}{d}}$-separable.
\end{corollary}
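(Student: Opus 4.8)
Fix a class $\mathcal{G}$ of $d$-dimensional graphs with parameter $c_e$; enlarging $c_e$ if necessary, we may assume $c_e \ge \tfrac12$, so that $\alpha := 2c_e \ge 1$. The idea is to transfer the separator guaranteed for overlap graphs down to the graphs of $\mathcal{G}$, and then convert vertex separability into edge separability using the degree bound. First I would invoke Lemma \ref{LemDDimAlpha}: every $G \in \mathcal{G}$ is a subgraph, on the same vertex set, of some $\alpha$-overlap graph $G'$ in $d$-dimensional space. Then Lemma \ref{LemDDimensionalSep} (with this fixed constant $\alpha$) supplies a partition $V(G') = A \sqcup B \sqcup C$ with no edges of $G'$ between $A$ and $B$, with $|A|,|B| \le \tfrac{d+1}{d+2}\,p$, and with $|C| \le \beta_0\bigl(\alpha p^{(d-1)/d} + c(\alpha,d)\bigr)$.

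The next step is bookkeeping on the separator size. Since $\alpha$ and $d$ are constants for the class, $\alpha p^{(d-1)/d} + c(\alpha,d) \le \bigl(\alpha + c(\alpha,d)\bigr) p^{(d-1)/d}$ for all $p \ge 1$, so $|C| \le \beta p^{(d-1)/d}$ with $\beta := \beta_0(\alpha + c(\alpha,d))$ a constant depending only on the class. Because $G$ has the same vertices as $G'$ and only fewer edges, the very same partition $A, B, C$ witnesses vertex $p^{(d-1)/d}$-separability of $G$: there are certainly no edges of $G$ between $A$ and $B$, and the size bounds carry over verbatim (with $\alpha = \tfrac{d+1}{d+2} \in [\tfrac12,1)$ and, say, $m = 2$, the few corner cases of small $p$ being handled by the constant $m$ exactly as the remark after Definition \ref{DefEdgeSep} anticipates). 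Hence $\mathcal{G}$ is vertex $p^{(d-1)/d}$-separable.

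Finally, Lemma \ref{LemDegLimited} shows that every graph in $\mathcal{G}$ has vertex degree bounded by $q = (2c_e+1)^d$, a constant for the class. Applying Lemma \ref{AssSepVtoE} to the vertex-separable class $\mathcal{G}$ upgrades this to edge $p^{(d-1)/d}$-separability, with parameters $\max(\tfrac23,\tfrac{d+1}{d+2}) = \tfrac{d+1}{d+2}$, $q\beta$, and $2$; as already noted in the paper, for a class with degree bound the two notions of $\lambda$-separability coincide, so either reading of the corollary's statement is established. There is no genuinely hard step here: the only things requiring care are (i) absorbing the additive term $c(\alpha,d)$ in the separator size into the multiplicative constant $\beta$, which is legitimate precisely because $p^{(d-1)/d}\ge 1$, and (ii) checking that passing from $G'$ to its spanning subgraph $G$ preserves all three defining conditions of a vertex separator — both of which are immediate.
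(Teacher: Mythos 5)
Your proposal is correct and follows essentially the same route as the paper, which derives the corollary directly from Lemmas \ref{LemDDimAlpha} and \ref{LemDDimensionalSep}; you have merely spelled out the bookkeeping (absorbing the additive term $c(\alpha,d)$ into the constant, transferring the separator from the overlap graph to its spanning subgraph, and invoking Lemmas \ref{LemDegLimited} and \ref{AssSepVtoE} for the vertex-to-edge conversion) that the paper leaves implicit.
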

\begin{proof}
Immediately follows from lemmas \ref{LemDDimAlpha} and \ref{LemDDimensionalSep}.
\end{proof}

\subsection{Shannon function bounds}

\paragraph{$d$-dimensional circuits.}
We apply the properties of $d$-dimensional graphs proved in the previous section and obtain the lower bound for $d$-dimensional circuits.
\begin{theorem}\label{ThLowerBoundDDimensional}
    Let $T$ be a $d$-dimensional support. Then
	\begin{equation*}
	L(M_k^T, n) \gtrsim \frac{2^n}{\min(n, d \log k)}\quad \mbox{as}\quad k\to\infty,\ n\to\infty.
	\end{equation*}
\end{theorem}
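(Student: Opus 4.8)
The plan is to deduce the theorem from Theorem~\ref{ThLowerBound} by verifying that every $d$-dimensional support, together with the monotone class of its finite subgraphs, lies in some class $\mathcal{G}(\lambda, q, \theta)$ with $\lambda = \frac{d-1}{d}$.

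First I would collect the three structural properties of a $d$-dimensional support $T$ (with parameter $c_e$) that are needed to place it into $\mathcal{G}(\lambda, q, \theta)$. Lemma~\ref{LemDegLimited} gives a uniform degree bound $q = (2c_e + 1)^d$ on all finite subgraphs of $T$. Lemma~\ref{LemSubgraphExpLimited} gives a constant $\theta$ such that the number of non-isomorphic subgraphs of $T$ on $p$ vertices is at most $\theta^p$. Finally, Corollary~\ref{Sl1LemDDimensionalSep} — which rests on the realization of $d$-dimensional graphs as $2c_e$-overlap graphs (Lemma~\ref{LemDDimAlpha}) and the separator theorem for overlap graphs (Lemma~\ref{LemDDimensionalSep}) — shows that the monotone class of finite subgraphs of $T$ is $p^{\frac{d-1}{d}}$-separable; since $d \ge 2$ we have $\frac12 \le \lambda = \frac{d-1}{d} < 1$, and if the separator obtained there is a vertex separator one converts it to an edge separator of the same order via Lemma~\ref{AssSepVtoE}, using the degree bound $q$. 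Hence $T \in \mathcal{G}(\lambda, q, \theta)$ with $\lambda = \frac{d-1}{d}$ and suitable constants $q$, $\theta$.

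Then I would simply invoke Theorem~\ref{ThLowerBound} for this $T$ and this $\lambda$, obtaining
\[
	L(M_k^T, n) \gtrsim \max \left(\frac{2^n}{n}, \frac{2^n (1 - \lambda)}{\log k} \right)\quad\mbox{as}\quad k\to\infty,\ n\to\infty.
\]
Substituting $1 - \lambda = \frac1d$ turns the right-hand side into $\max\!\left(\frac{2^n}{n}, \frac{2^n}{d \log k}\right) = \frac{2^n}{\min(n, d\log k)}$, which is exactly the claimed bound. There is no genuine obstacle here: the statement is essentially a specialization of Theorem~\ref{ThLowerBound}, and the only points requiring care are purely bookkeeping ones — making sure the separability, degree, and subgraph-counting statements are applied to the \emph{monotone class} of finite subgraphs of $T$ (so that the constants $q$, $\theta$, $\alpha$, $\beta$, $m$ are common to all of them), and checking that the additive term $c(\alpha, d)$ in Lemma~\ref{LemDDimensionalSep} is absorbed into $O(p^{\lambda})$ once $p$ exceeds the threshold $m$, which is precisely what Corollary~\ref{Sl1LemDDimensionalSep} already records.
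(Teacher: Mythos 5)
Your proposal is correct and follows exactly the paper's own route: the paper proves the theorem by citing Theorem~\ref{ThLowerBound}, Lemmas~\ref{LemDegLimited} and~\ref{LemSubgraphExpLimited}, and Corollary~\ref{Sl1LemDDimensionalSep}, which is precisely the chain of reductions you spell out (including the vertex-to-edge separator conversion via Lemma~\ref{AssSepVtoE} and the substitution $1-\lambda=\frac{1}{d}$). Your write-up is simply a more detailed account of the same argument.
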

\begin{proof}
Immediately follows from theorem  \ref{ThLowerBound}, lemmas \ref{LemDegLimited}, \ref{LemSubgraphExpLimited} and corollary \ref{Sl1LemDDimensionalSep}.
\end{proof}

\paragraph{Multidimensional rectangular circuits.}
Multidimensional rectangular circuits are a special case of $d$-dimensional circuits, thus the lower bound from theorem \ref{ThLowerBoundDDimensional} is also applicable for these circuits.

An upper bound of Shannon function for multidimensional rectangular circuits was proved in \cite{KModel1}.

\begin{lemma}[\cite{KModel1}, theorem 1]\label{LemUpperBoundDDimensional}
\[
	L(M_k^d, n) \lesssim \frac{2^n}{\min(n,d \log k)}\quad\mbox{as}\quad k\to\infty,\ n\to\infty.
\]
\end{lemma}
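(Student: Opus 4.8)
The plan is to prove this upper bound by the explicit Lupanov-type construction of \cite{KModel1}: given an arbitrary $f\in B_n$, exhibit a $k$-layer circuit on $\mathbb{Z}^d$ computing $f$ whose complexity is $(1+o(1))\,\frac{2^n}{\mu}$, where $\mu=\min(n,d\log k)$. Both regimes, $\mu=n$ and $\mu=d\log k$, should be handled by one construction parametrized by the number of ``column'' variables; the value $d\log k$ enters exactly as the point where the bandwidth Constraint~\ref{CondEdge} starts to bind.

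First, the Boolean core. Split the variables $x=(x',x'')$ with $|x''|=b$, $|x'|=n-b$, view the truth table of $f$ as a $2^{\,n-b}\times 2^{b}$ matrix, and cut it into horizontal strips of $s$ rows. A single min-term decoder of $x''$ (cost $O(2^{b})$) is shared by all strips; inside strip $i$ one forms the at most $\min(2^{s},2^{b})$ ``column-type'' functions of $x''$ as disjunctions of decoder outputs, and recovers $f$ on that strip by a bounded-depth selection gated by the strip-membership predicate of $x'$. Optimizing $s$ and $b$ in the classical way (Lupanov \cite{Lupanov1}) makes the total gate count $(1+o(1))\frac{2^n}{b}$ as long as $b\le\log n$, hence $(1+o(1))\frac{2^n}{n}$ for $b\approx\log n$; the only new ingredient is that $b$ cannot be pushed above roughly $d\log k$, which is precisely what forces the denominator $\mu=\min(n,d\log k)$.

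The real work is the embedding into $\mathbb{Z}^d$ subject to Constraints~\ref{CondVertex}--\ref{CondEdge}. I would place the whole circuit inside a $d$-dimensional box of side $\Theta((2^n/\mu)^{1/d})$, so that the vertex budget is met, and arrange the $2^{b}$ decoder lines as $d$-dimensional ``buses'' threading the box, with the strip sub-circuits distributed along the buses so that each sub-circuit's inputs are produced locally (the decoder being a tree, its outputs can be re-derived wherever needed without asymptotic overhead). Feasibility of the wiring rests on a cross-section count: any axis-aligned hyperplane cuts only $O((2^n/\mu)^{(d-1)/d})$ grid edges, each carrying at most $k$ wires, so at most $O(k(2^n/\mu)^{(d-1)/d})$ wire-crossings are available; the condition $2^{b}\le k^{d}$ (i.e.\ $b\le d\log k$) is exactly what makes a bus of width $2^{b}$ fit through a cross-section, and a recursive halving of the box along coordinate directions keeps every interior edge under $k$ wires while keeping at most one gate per vertex.

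The step I expect to be the main obstacle is this geometric bookkeeping: verifying that the flat, high-fan-out Lupanov circuit folds into $d$ dimensions so that \emph{simultaneously} no edge exceeds $k$ wires, no vertex holds two gates, and the vertex count grows only by a $1+o(1)$ factor. The Boolean side is routine; the delicate part is controlling the lower-order terms in the layout (routing padding, bus replication, the $\lceil\cdot\rceil$ losses) and checking that they are absorbed into $\lesssim$, which is where the argument of \cite{KModel1} concentrates its effort. I would follow that layout, recursively halving the box along coordinate directions, choosing bus widths as powers of two at most $k^{d}$, and padding each piece so the routing fits.
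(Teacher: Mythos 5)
The paper does not actually prove this statement: Lemma~\ref{LemUpperBoundDDimensional} is imported as a black box from \cite{KModel1} (Theorem~1 there), and its only role here is to convert the lower bound of Theorem~\ref{ThLowerBoundDDimensional} into the asymptotics of Corollary~\ref{Sl1AsympDDimensional}. So there is no in-paper argument to compare yours against; the ``proof'' in this paper is the citation, and your attempt should be judged as a reconstruction of \cite{KModel1}.

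Judged that way, your sketch has the right overall shape --- a Lupanov-type synthesis whose shared-decoder width is capped by the available bandwidth $k\cdot(\mbox{cross-section})$ of the grid, which is where the threshold $d\log k$ enters --- but it has two genuine gaps. First, the parameters of the Boolean core are inconsistent: with $b\le\log n$ a count of $(1+o(1))\,2^n/b$ is of order $2^n/\log n$, not $2^n/n$. In Lupanov's method the denominator comes from the strip height $s\approx n$, not from the number of column variables $b\approx\log n$, so you must identify precisely which parameter of the construction is forced below $d\log k$ by the wiring constraint and why that parameter, rather than another, ends up in the denominator $\min(n,d\log k)$; as written the two regimes do not match up. Second, the embedding --- which you yourself call ``the real work'' --- is left as a to-do list (recursive halving of the box, buses of width at most $k^d$, padding): none of the three simultaneous requirements (at most one gate per vertex, at most $k$ wires per edge, only a $1+o(1)$ overhead in vertex count) is actually verified, and the claim that decoder outputs can be ``re-derived wherever needed without asymptotic overhead'' needs a count, since replicating an $O(2^b)$ decoder at every strip location is not obviously absorbed into lower-order terms. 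Since that verification is the entire content of the cited theorem, the proposal is a plausible plan for a proof rather than a proof.
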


Applying theorem \ref{ThLowerBoundDDimensional} and lemma \ref{LemUpperBoundDDimensional} we obtain the asymptotics of Shannon function for multidimensional rectangular circuits.

\begin{corollary}\label{Sl1AsympDDimensional}
\[
	L(M_k^d, n) \sim \frac{2^n}{\min(n,d \log k)}\quad\mbox{as}\quad k\to\infty,\ n\to\infty.
\]
\end{corollary}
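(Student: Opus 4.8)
The plan is to derive the asymptotic equality directly from the two matching bounds already established, so the argument is very short. First I would note that the $d$-dimensional integer grid $\mathbb{Z}^d$ is itself a $d$-dimensional support: placing each lattice point at its own coordinates yields pairwise distances at least $1$ and edge lengths exactly $1$, so one may take $c_v = c_e = 1$ in the definition (this is precisely the grid example given after the definition of $d$-dimensional graphs). Hence Theorem~\ref{ThLowerBoundDDimensional} applies with $T = \mathbb{Z}^d$, and recalling the abbreviation $M_k^d = M_k^{\mathbb{Z}^d}$ we get
\[
	L(M_k^d, n) \gtrsim \frac{2^n}{\min(n, d\log k)}\quad\mbox{as}\quad k\to\infty,\ n\to\infty.
\]

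Next I would invoke Lemma~\ref{LemUpperBoundDDimensional}, quoted from \cite{KModel1}, which supplies the reverse estimate
\[
	L(M_k^d, n) \lesssim \frac{2^n}{\min(n, d\log k)}\quad\mbox{as}\quad k\to\infty,\ n\to\infty.
\]
Finally, unwinding the notation: $f \gtrsim g$ means $\varliminf \frac{f}{g} \ge 1$ and $f \lesssim g$ means $\varlimsup \frac{f}{g} \le 1$, both along the joint passage $k\to\infty,\ n\to\infty$. Since $\varliminf \le \varlimsup$ always, the two displayed relations force $\varliminf \frac{f}{g} = \varlimsup \frac{f}{g} = 1$ with $f = L(M_k^d,n)$ and $g = 2^n/\min(n, d\log k)$, i.e. the limit of the ratio exists and equals $1$. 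This is exactly the claimed asymptotic equivalence.

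There is no genuine obstacle in this final step — all the substance lies in Theorem~\ref{ThLowerBound} together with the verification in Section~\ref{SecDDimensional} that every $d$-dimensional graph lies in some $\mathcal{G}(\frac{d-1}{d}, q, \theta)$, and in the cited upper bound. The only point deserving a sentence of care is that the $\gtrsim$ and $\lesssim$ relations are taken with respect to the same joint limit in $k$ and $n$, which is what legitimizes combining them into a single $\sim$; I would state this explicitly rather than treat it as automatic.
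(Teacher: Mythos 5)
Your proposal is correct and follows exactly the paper's own route: the paper likewise obtains the corollary by combining the lower bound of Theorem~\ref{ThLowerBoundDDimensional} (applied to the grid $\mathbb{Z}^d$, which is a $d$-dimensional support) with the cited upper bound of Lemma~\ref{LemUpperBoundDDimensional}. Your extra care in unwinding $\gtrsim$ and $\lesssim$ into a single $\sim$ along the joint limit is sound and merely makes explicit what the paper leaves implicit.
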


\section{Conclusion}
In this paper we proved the lower bound for Shannon function $L(M_k^T, n) \gtrsim \max \left(\frac{2^n}{n}, \frac{2^n (1 - \lambda)}{\log k} \right)$ for any support $T$ from a class $\mathcal{G}(\lambda, q, \theta)$. An important special case of such supports are $d$-dimensional graphs for which thereby we proved the lower bound $L(M_k^T, n) \gtrsim \frac{2^n}{\min(n, d \log k)}$.

A natural direction of developing the obtained results is examining classes of graphs with a separability function different from $p^{\lambda}$. For example, graphs supporting a placement in a hyperbolic space are of interest. It was proved in \cite{HyperbolicSeparatorTheorem} that such graphs have logarithmic separability function. Corollary \ref{CorollaryLowerBoundLim} of theorem \ref{ThLowerBound} allows to obtain a lower bound for Shannon function for such graphs. However the question about upper bounds remains open.

\clearpage


\begin{thebibliography}{99}

	\RBibitem{Albrecht1}
	\by А. Альбрехт
	\paper О схемах из клеточных элементов
	\jour Проблемы кибернетики
	\yr 1977
	\vol 33
	\pages 209--214
	
	\RBibitem{Gribok1}
	\by С.\,В. Грибок
	\paper Об одном базисе для схем из клеточных элементов
	\jour Вестник Московского университета. Серия 15: Вычислительная математика и кибернетика
	\yr 1999
	\vol 4
	\pages 36--39

	\RBibitem{Efimov1}
	\by А.\,А. Ефимов
	\paper Верхняя оценка энергопотребления в классе объемных схем
	\jour Интеллектуальные системы. Теория и приложения
	\yr 2019
	\vol 23
	\issue 1
	\pages 117--132

	\RBibitem{Efimov2}
	\by А.\,А. Ефимов
	\paper Верхняя оценка энергопотребления объемных схем, реализующих булевы операторы
	\jour Интеллектуальные системы. Теория и приложения
	\yr 2019
	\vol 23
	\issue 2
	\pages 105--124


	\RBibitem{Kalachyov0}
	\by Г.\,В. Калачев
	\paper Порядок мощности плоских схем, реализующих булевы функции 
	\jour Дискретная математика
	\yr 2014
	\vol 26
	\issue 1
	\pages 49--74

	\RBibitem{Kalachyov1}
	\by Г.\,В. Калачев
	\paper Об одновременной минимизации площади, мощности и глубины плоских схем, реализующих частичные булевы операторы
	\jour Интеллектуальные системы
	\yr 2016
	\vol 20
	\issue 2
	\pages 203--266

	\RBibitem{Kalachyov2}
	\by Г.\,В. Калачев
	\paper Оценки мощности плоских схем, реализующих функции с~ограниченным числом единиц
	\jour Интеллектуальные системы. Теория и приложения
	\yr 2017
	\vol 21
	\issue 1
	\pages 28--96

	\RBibitem{Kalachyov3}
	\by Г.\,В. Калачев
	\paper Оценки мощности плоских схем, реализующих монотонные функции
	\jour Интеллектуальные системы. Теория и приложения
	\yr 2017
	\vol 21
	\issue 2
	\pages 163--192

	\RBibitem{Kalachyov4}
	\by Г.\,В. Калачев
	\paper О нижней оценке максимального потенциала плоских схем с~несколькими выходами через площадь
	\jour Интеллектуальные системы. Теория и приложения
	\yr 2018
	\vol 22
	\issue 1
	\pages 111--117

	\RBibitem{Korshunov1}
	\by А.\,Д. Коршунов
	\paper Об оценках сложности схем из объемных функциональных элементов и объемных схем из функциональных элементов
	\jour Проблемы кибернетики
	\yr 1967
	\vol 19
	\pages 275--283

	\RBibitem{Kravtsov1}
	\by С.\,С. Кравцов
	\paper О реализации функций алгебры логики в одном классе схем из функциональных и коммутационных элементов
	\jour Проблемы кибернетики
	\yr 1967
	\vol 19
	\pages 285--293

	\RBibitem{Lupanov1}
	\by О.\,Б. Лупанов
	\paper О синтезе некоторых классов управляющих систем
	\jour Проблемы кибернетики
	\yr 1963
	\vol 10
	\pages 63--97


	\RBibitem{KModel1}
	\by Т.\,Р. Сытдыков
	\paper Сложность синтеза многомерных клеточных схем
	\jour Интеллектуальные системы. Теория и приложения
	\yr 2019
	\vol 23
	\issue 3
	\pages 61--80

	\RBibitem{Chashkin1}
	\by А.\,В. Чашкин
	\book Дискретная математика
	\publ Академия
	\publaddr Москва 
	\yr 2012
	\totalpages 352
	
	\RBibitem{Cheremisin1}
	\by О.\,В. Черемисин
	\paper Об активности схем из клеточных элементов, реализующих систему всех конъюнкций
	\jour Дискретная математика
	\yr 2003
	\vol 15
	\issue 2
	\pages 113--122
	
	\RBibitem{Shkalikova1}
	\by Н.\,А. Шкаликова
	\paper О реализации булевых функций схемами из клеточных элементов
	\inbook Математические вопросы кибернетики
	\bookvol 2
	\publ Наука
	\publaddr Москва
	\yr 1989
	\pages 177--197

	\Bibitem{PlanarLowerBound1} 
	\by N. Bonichon, C. Gavoille, N. Hanusse, D. Poulalhon, G. Schaeffer
	\paper Planar Graphs, via Well-Orderly Maps and Trees
	\jour Graphs and Combinatorics
	\yr 2006
	\vol 22
	\pages 185--202

	\Bibitem{RDivision1} 
	\by G.\,N. Frederickson
	\paper  Fast algorithms for shortest paths in planar graphs, with applications
	\jour SIAM Journal on Computing
	\yr 1987
	\vol 16
	\issue 6
	\pages 1004--1022

	\Bibitem{HyperbolicSeparatorTheorem}
	\by S. Kisfaludi-Bak
	\paper Hyperbolic Intersection Graphs and (Quasi)-Polynomial Time
	\inbook Proceedings of the Thirty-First Annual ACM-SIAM Symposium on Discrete Algorithms
	\serial SODA '20
	\publ Society for Industrial and Applied Mathematics
	\publaddr USA, Utah, Salt Lake City
	\yr 2020
	\pages 1621--1638
	
	\Bibitem{Kramer1}
	\by M.\,R. Kramer, J. van Leeuwen
	\paper The VLSI complexity of Boolean functions
	\inbook Logic and Machines: Decision Problems and Complexity
	\eds B{\"o}rger E., Hasenjaeger G., R{\"o}dding D.
	\publ Springer
	\publaddr Berlin, Heidelberg
	\yr 1984
	\pages 397--407

	\Bibitem{PlanarSeparatorTheorem} 
	\by R.\,J. Lipton, R.\,E. Tarjan
	\paper A separator theorem for planar graphs
	\jour SIAM Journal on Applied Mathematics
	\yr 1979
	\vol 36
	\issue 2
	\pages 177--189

	\Bibitem{McColl1} 
	\by W.\,F. McColl
	\paper Planar circuits have short specifications
	\jour 2nd STACS. Lecture Notes in Computer Science
	\yr 1985
	\vol 182
	\pages 231--242

	\Bibitem{MultidimensionalSeparator} 
	\by G.\,L. Miller, S. Teng, W. Thurston, S.\,A. Vavasis
	\paper Geometric separators for finite-element meshes
	\jour SIAM Journal on Scientific Computing
	\yr 1998
	\vol 19
	\issue 2
	\pages 364--386

	\Bibitem{Muller1} 
	\by D.\,E. Muller
	\paper Complexity in Electronic Switching Circuits
	\jour IRE Transactions on Electronic Computers
	\yr 1956
	\vol EC-5
	\issue 1
	\pages 15--19
	
	\Bibitem{Savage1}
	\by J.\,E. Savage
	\paper Planar Circuit Complexity and The Performance of VLSI Algorithms
	\inbook VLSI Systems and Computations
	\eds Kung H.T., Sproull B., Steele G.
	\publ Springer
	\publaddr Berlin, Heidelberg
	\yr 1981
	\pages 61--68
	
	\Bibitem{Savage2} 
	\by J.\,E. Savage
	\paper The performance of multilective VLSI algorithms
	\jour Journal of Computer and System Sciences
	\yr 1984
	\vol 29
	\issue 2
	\pages 243--273
	
	\Bibitem{Thompson1}
	\by C.\,D. Thompson
	\paper Area-Time Complexity for VLSI
	\inbook Proceedings of the Eleventh Annual ACM Symposium on Theory of Computing
	\serial STOC '79
	\publ Association for Computing Machinery
	\publaddr New York, NY, USA
	\yr 1979
	\pages 81--88
	
	\Bibitem{Ullman1}
	\by J.\,D. Ullman
	\book Computational Aspects of VLSI
	\publ W. H. Freeman \& Co
	\publaddr USA 
	\yr 1984



\end{thebibliography}
\end{document}